\documentclass[copyright,creativecommons]{eptcs}
\usepackage[latin1]{inputenc}
\usepackage{multirow} 
\usepackage{amsmath}
\usepackage{amsfonts}
\usepackage{amssymb}
\usepackage{gastex}
\usepackage{stmaryrd}
\usepackage{framed}
\usepackage{wrapfig}
\def\statespace{M}

\usepackage{pstricks, pst-node, pst-tree}
\usepackage{graphicx}

\newtheorem{theorem}{Theorem}[section]

\newtheorem{lemma}[theorem]{Lemma}

\newtheorem{corollary}[theorem]{Corollary}

\newtheorem{construction}[theorem]{Construction}

\newtheorem{definition}[theorem]{Definition}

\newenvironment{proof}{ {\bf Proof: }}{}

\newcommand\cla{\mathsf{cone}}

\newcommand\suc{\mathsf{suc}}

\newcommand\frag{\mathsf{frag}}
\newcommand\res{\mathsf{res}}
\newcommand\safe{\mathsf{sfrch}}

\newcommand\qed{\hfill\ensuremath{\Box}}

\sloppy

\title{Rapid Recovery for Systems with Scarce Faults%
\thanks{The research was supported
by the National Science Council (NSF) 97-2221-E-002-129-MY3, 
by the Israeli Science Foundation (ISF) grant 1252/09, and 
by the Engineering and Physical Sciences Research Council (EPSRC) grant EP/H046623/1.}}

\author{Chung-Hao Huang
\institute{Department of Electrical Engineering\\
           National Taiwan University, Taiwan, ROC}
\and
Doron Peled
\institute{Department of Computer Science\\ 
	   Bar Ilan University, Ramat Gan 52900, Israel}
\and
Sven Schewe
\institute{Department of Computer Science\\
	   University of Liverpool, Liverpool, UK}
\and
Farn Wang
\institute{Dept. of EE, 
           National Taiwan University, Taiwan, ROC\\
           CITI, Academia Sinica, Taiwan, ROC
           }
}

\usepackage{eepic}
\usepackage{pslatex}
\usepackage{graphicx}
\usepackage{latexsym}
\usepackage{amsfonts}

\usepackage{xcolor}

\begin{document}
\maketitle

\begin {abstract}
Our goal is to achieve a high degree of fault tolerance through the control of a safety critical systems.
This reduces to solving a game between a malicious environment that injects failures and a controller who tries to establish a correct behavior.
We suggest a new control objective for such systems that offers a better balance between complexity and precision:
we seek systems that are \emph{$k$-resilient}.
In order to be $k$-resilient, a system needs to be able to rapidly recover from a small number, up to $k$, of local faults infinitely many times, provided that blocks of up to $k$ faults are separated by short recovery periods in which no fault occurs.
$k$-resilience is a simple but powerful abstraction from the precise distribution of local faults, but much more refined than the traditional objective to maximize the number of local faults.
We argue why we believe this to be the right level of abstraction for safety critical systems when local faults are few and far between.
We show that the computational complexity of constructing optimal control with respect to resilience is low and demonstrate the feasibility through an implementation and experimental results.
\end{abstract}

\section{Introduction}

We study the problem of achieving fault tolerance by construction (i.e., by synthesis~\cite{Church/63/Logic,Pnueli+Rosner/89/Synthesis,Rabin/69/Automata}).
This is a challenging problem for two simple reasons.
One reason is that the size of the relevant systems is too large for exhaustive analysis.
A second difficulty is the need to preserve the behavior of the system under failures and the proposed recovery mechanism, related to (i.e., included in) the non-failing behavior of the {\em original system}.
This makes the problem  much harder than checking whether or not a system satisfies a given set of simple temporal properties.

Our goal is to develop a technique for synthesizing a controller that is resilient to an \emph{unbounded} number of failures.
With this problem in mind, we took our inspiration from methods for resilient avionic systems \cite{Ru}, where fault tolerance is designed to recover from a \emph{bounded} number of failures.
The number of failures a system needs to tolerate is inferred from the given maximal duration of a flight and the mean time between failures of the individual components.
Designing an optimal control with respect to this objective reduces to finding the maximal number of failures a system can tolerate without exhibiting an error.
Maximizing the number of tolerable failure is an abstraction from the control objective of minimizing the chance of erroneous system behavior.
This abstraction is justified by the simplicity of the control objective, but we argue that a finer abstraction provides for better control while maintaining a simple objective.

We develop a technique for synthesizing a controller that is resilient to an \emph{unbounded} number of failures. 
Instead, we assume a bound on the number of {\em dense failures}, which may occur before the system is fully recovered.
After full recovery, the system is allowed again the same number of failures.
Now, if the mean time between failures is huge compared to the time the system needs to fully recover, then the expected time for system breakdown grows immensely.
We assume an abstract representation of the system and recovery building blocks such as redundancy and majority checking among components.

\paragraph{Comparison.}
To demonstrate the difference between the objective to tolerate $k$ errors and $k$ dense errors, we exemplify the quality guarantees one obtains for a system (e.g., an aeroplane) with an operating time of 20 hours and a mean time between exponentially distributed local failures (MTBF) of 10 hours, assuming a repair time of 36 seconds.
\begin{center}
\begin{tabular}{cccccccc}
                 & $1$      &    $2$   &  $3$  & $4$ & $5$ & $6$ & $\ldots$ \\
$k$ failures       & $59.4\%$ & $33.3\%$ & $14.3$ & $5.3\%$ & $1.7\%$ & $0.5\%$ & $\ldots$ \\
$k$ dense failures & $0.2\%$ & $2\cdot 10^{-4}\%$ & $2\cdot 10^{-7}\%$ & $2\cdot 10^{-10}\%$ & $2\cdot 10^{-13}\%$ & $2\cdot 10^{-16}\%$ & $\ldots$
\end{tabular}
\end{center}
The figures for $k$ failures are simply the values for the Poisson distribution with expected value $2$ ($20$~hours flight and $10$ hours MTBF).
To explain the figures for $k$ dense errors, consider the density of 2 dense errors occurring in close succession.
If a failure occurs, the chance that the next failure occurs within the repair time (36 seconds) is approximately $\frac{1}{1000}$.
Only failures for which this is the case could cause an error.
The mean time between blocks of two dense failures is therefore not ten hours, but ten thousand hours.
Likewise, it increases to ten million and ten billion hours for blocks of three and four dense failures, respectively.
Maximizing the number of dense failures that are permitted is therefore a natural design goal.

Our proposed correctness criterion is formalized through a game between the system and an environment that induces the failures; the system is 
trying to recover, while the environment can inject further failures.
We provide a technique, based on finding a game strategy,
that allows checking whether or not we can control our system 
to meet this predefined resilience level. It also allows finding the 
maximal resilience level of the system. Moreover, the sought strategy
provides the control that can be added to the system to achieve
this resilience level.
This is a more ambitious control problem than just meeting predefined bounds, 
allowing us to tap the full potential of a system.

\paragraph{\bf Related Work.}
Dijkstra's self-stabilization criterion~\cite{Arora,Dijkstra} suggests to build systems that eventually recover
to a `good state', from where the program commences normally.
Instead of {\em constructing} a system to satisfy such a goal, one
may want to apply control theory to {\em restrict} the execution of
an existing system to achieve an additional goal.
Our control objective is a recovery mechanism for up to $k$ failures.
After recovery, the system has to tolerate up to $k$ failures again, and so forth.
In this work, 
we suggest a mechanism to synthesize a recovery mechanism for a given fault model and recovery primitives.

Synthesis goes back to Church's solvability problem~\cite{Church/63/Logic}. It inspired Rabin's work on finite automata over infinite structures~\cite{Rabin/69/Automata} and B\"uchi and Landweber's works on finite games of infinite duration~\cite{Buchi/62/Automata,Buchi+Landweber/69/Sequential}.
These
techniques have been developed further in open
synthesis~\cite{Pnueli+Rosner/89/Synthesis,Pnueli+Rosner/89/Asynchronous,Kupferman+Vardi/97/Incomplete,Schewe+Finkbeiner/06/Asynchronous} and control theory \cite{AAE,EKA,GR,Ru}.

Traditionally, fault tolerance refers to various basic fault models~\cite{AAE}, such as a limited number of errors \cite{GRT}.
These traditional fault models are subsumed by more general synthesis or control objectives \cite{Asarin+Maler+Pnueli/94/controlerSynthesis,AAE,Ramadge+Wonham/89/Control,Thomas/94/finiteStrategies}; as simple objectives with practical relevance, they have triggered the development of specialized tools~\cite{EKA,GR}.

In \cite{DHLN}, an interesting notion of robustness based on Hamming 
and Lewenstein distance related to the number of past states is defined. 
It establishes a connection between these distances with a notion of 
synchronization that characterizes the ability of the system to reset for 
combinatorial systems.
In \cite{BGHJ}, `ratio games' are discussed, where the objective is 
to minimize the ratio between failures induced by the environment and system errors caused by them.

Maximizing the number of dense failures can, just like maximizing the overall number of failures, be viewed as an abstraction of the objective to minimize the likelihood of an error.
In principle, one could seek optimal control against a Markovian failure model rather than approaching an abstraction.
But approximating optimal control seems to expensive \cite{FRSZ-fsttcs11} and results in more complex control strategies.
Two further problem are that the precise probabilistic behavior of the environment is usually unknown, and that probabilistic analysis does not mix well with abstraction.

\section{Preliminaries}
\label{sec:faulTolerance}

We propose a method for achieving fault tolerance under an architecture that 
allows generic building blocks such as redundancy and majority 
checks \cite{Pradhan96}, while making some realistic 
restricting assumptions on the behavior of the faults.  
The original system can be described using a transition system or an automaton.
We impose a failure model that transforms the system by allowing new
{\em fault transitions}. We include a recovery mechanism that transforms the 
resulting faulty version of the system by allowing new
{\em recovery transitions}.

We assume that the transitions of the recovery mechanism
cannot block the added fault transitions. 
They refer to standard recovery building blocks, 
e.g., providing redundancy and performing majority checks.
We do not study or try to design these building blocks themselves, 
but resolve the choices that need to be made in the presence of alternative 
recoveries, e.g., which of them should be initiated.
The recovery mechanism itself
does neither take the number of the faults into account, nor their density.

Our task is to prune the choices available to the recovery
mechanism, based on assumptions made on the behavior of the fault model. 
One can consider this as a game between two players: the recovery
mechanism that needs to make the right decision when choosing from several
recovery transitions, against the failure model that can inject some
(restricted) amount of fault transitions into the system.

The ability to choose makes the
moves of the recovery mechanism {\em controllable}, while the moves
represented by the failure model are {\em uncontrollable}. 
Although the moves of the antagonistic failure mechanism are
uncontrollable, we may make some reasonable assumptions about them, e.g.,
limiting the number of failures occurring in close succession. 
Similar restrictions are
used in practice: in avionics, for example, one can provide some
measurable bounds for such parameters per aircraft, based on
capabilities such as flight distance and speed, and then 
add some safety factor.

Performing a game strategy check on the level of
the system combined with the fault model and recovery mechanism
is unrealistic. In fact, it is
also unnecessary:
standard building blocks, e.g., redundancy and majority-checking
for fault tolerance, provide a simple abstraction of it
into a finite state system.
It includes in particular some {\em error} states
that are unrecoverable, represented as sinks in the state graph of 
the system.  
The abstraction is often independent of the actual details of 
the original system.

A game theoretical search can then be applied to the abstract
version.
The goal of the protagonist is to avoid the error states; the fault tolerance
mechanism must ensure this goal in the presence of uncontrolled
(malicious) failures imposed by the failure mechanism.
The game takes the restrictions on
the faults into account. A win of the protagonist (the recovery mechanism)
provides a strategy, which can immediately be translated into a controller that can be used to restrict the behavior of the recovery transitions.
Playing the game on the actual system rather than the abstract one
according to this strategy
guarantees the same goal
for the actual system with the fault model and recovery mechanism.
We can not only apply such a strategy search in order to check
the resilience level: the strategy
obtained provides a control for the recovery mechanism that achieves this
resilience level.
Note that the controller can only block choices of the
recovery mechanism and not the transitions of the
original system (or the fault model).

The modeling of the transformation involves the fault model and the recovery
mechanism. The abstractions used to reason on a small model 
are rather standard model-checking techniques, hence 
would not be formally presented here. 
In the following section, we reason formally on the model that represents 
the abstracted system including the fault model and the recovery 
mechanism. 

\paragraph{\bf Running Example: Controlling Avionic Systems.}

Consider an example of a system that
includes $n$ processors, each can follow the instructions
of the original system $\statespace$, or be engaged in memory recovery. 
There are also $m$ copies of the memory.
When a copy of the memory fails, a processor can be assigned to recover it.
Majority check can be used to detect that a processor is faulty
or that memory is defected (usually, both
would happen at the same time). 
For recovery, we can
set a free processor to recover some memory, or
make a processor follow the code of the majority of processors.
The problem of the controller is to decide whether to make
a processor follow the execution of the majority, or to
assign it to recover faulty memory. 
We assume the following uncontrollable fault and controllable recovery transitions:
\begin{description}
\item
 Uncontrollable failure events:
   \begin{itemize}
   \item A processor performing the code of $\statespace$ fails. 

\item Memory fault occurs. 

\end{itemize}
\item Controllable recovery events:
\begin{itemize}
    \item Remove a process from following
    the main system $\statespace$ and being subjected to
    majority check.
    \item Assign a free processor to recover a memory
    that is known to be faulty.
    \item 
    Assign a free processor to follow the code of
    the majority processors.
    \end{itemize}
\end{description}
If there are no processes to perform any of the above controllable recovery
events, then the system enters immediately an unrecoverable 
failure state.

The recovery mechanism in the above is quite typical in 
the design of fault-tolerant systems \cite{Pradhan96}.  
As explained, a practical recovery mechanism usually 
does not rely on the detailed structure of $M$.  
Instead, fault-detection 
techniques such as parity checks, voting (for majority checks), 
etc., are usually employed.
They are generic and therefore independent of the structure of $M$.  
Thus, an abstraction of $M$ can be used in verifying the resilience 
of the recovery mechanism to dense faults. 
Such an observation can greatly reduce the cost of verifying 
the recovery mechanism.

We assume that the events of the recovery mechanism and fault detection are reliable.
The issue of how to make these events reliable is orthogonal to the problem we are solving in this paper.

\section{Resilience to $k$-dense Failures}
\label{sec:kresil}

We propose as a correctness criterion that can refer to a small abstract
model $\cal T$, which contains the recovery and fault actions
as its controllable and uncontrollable transitions, respectively.
Such a \emph{transition system with failures} ${\cal T} = (S,\iota,\tau_c,\tau_u,F)$ has 
\begin{itemize}
\item states $S$, with an initial state $\iota\in S$ and  error (i.e., \emph{final} or \emph{sink}) states $F\subseteq S$,
\item {\em controlled} (recovery) transitions 
$\tau_c \subseteq (S \smallsetminus F) \times (S \smallsetminus F )$ 
and
\item
{\em uncontrolled} (failure) transitions $\tau_u \subseteq 
(S \smallsetminus F) \times S$, where
$\tau = \tau_c \cup \tau_u$.
\end{itemize}
We denote with $\suc_c(s)=\{t \in S \mid (s,t) \in \tau_c\}$ and $\suc_u(s)=\{t \in S \mid (s,t) \in \tau_u\}$ the controlled and uncontrolled successors of a state $s$, respectively. The size of $\cal T$, denoted $|\cal T|$, is $\max\{|S|,|\tau|\}$.
Every (non-error) state in $S \smallsetminus F$ has at least
one controlled successor.

A state $s$ is $k$-resilient if the system can be controlled such that groups of up to $k$ dense failures can be tolerated infinitely many times, provided that the system is given enough time to recover.
We describe informally how $k$-resilience is tested using a two-stage game over $\cal T$ between the protagonist, i.e., the recovery mechanism, 
and the antagonist, i.e., the failure mechanism, who jointly 
move a pebble over the transition system.
The construction of this set is described in Construction \ref{con:res}.

Let $G  \subseteq S \smallsetminus F$ be a subset of the non-error states.
These states are intuitively the ``good region'' that the protagonist player wishes to remain in, and, if left, wishes to return to.
In order to get a good intuition for the objective of this game, it is useful to consider its behavior when using the set of $k$-resilient states as $G$.
The game is, however, defined for any set $G  \subseteq S \smallsetminus F$; we assume for the moment that this subset is given, and will later show how to calculate it.

The pebble is initially placed on a state $s \in G$.
In every move, the protagonist starts with selecting a controlled transition originating from the state on which the pebble is.
The antagonist can either agree on executing this transition (moving the pebble along 
it) or play a failure by selecting an uncontrollable transition, which is then executed.

The first part of the game is a safety part, which ends when the antagonist selects his first uncontrollable transition.
With this move, the game proceeds to a reachability part. 
In the safety part of the game, the protagonist must remain in $G$; she loses (and the antagonists wins) immediately if $G$ is left.
In the second phase, the protagonist becomes a reachability player with the goal to recover to $G$.
She wins immediately when she has reached $G$ again.
The antagonist is further restricted in that he can, overall, play no more than $k$ failures.
Of course, if a state in $F$ is reached, the antagonist wins immediately, as these states are sinks and are never in $G$.
In \emph{infinite} plays, the protagonist wins if the play stays for ever in the safety phase; otherwise the antagonist wins.
The set of states from which the protagonist wins this game is the set of $k$-resilient states, denoted $\res_k(G)$.

A state $s$ is called \mbox{$k$-sfrch} (where sfrch refers to the combined safety/reachability objective) with respect to 
a set $G\subseteq S \smallsetminus F$ of non-error states, denoted $s\in\safe_k(G)$, if there 
is a strategy for the protagonist to win this game.
This can be defined as follows:

\begin{definition}
\label{k-safe}
Let $\safe_k : 2^{S \smallsetminus F} \rightarrow 2^{S \smallsetminus F}$ be the function 
such that $\safe_k ( G )$ is the subset of $G$, from which the protagonist wins in the above game.
\end{definition}

\begin{wrapfigure}{r}{0.45\textwidth}
\begin{center}
\psset{xunit=20mm,yunit=7mm}
\begin{pspicture}(0,-.15)(3,1)
\rput(0,0){\circlenode{1}{$1$}}
\rput(1,0){\circlenode{2}{$2$}}
\rput(2,0){\circlenode{3}{$3$}}
\rput(3,0){\circlenode[doubleline=true]{4}{$4$}}

\nccircle{->}{1}{.35}
\nccircle{->}{2}{.35}
\nccircle{->}{3}{.35}
\ncarc[linecolor=red,linestyle=dashed]{->}{1}{2}
\ncarc{->}{2}{1}
\ncarc[linecolor=red,linestyle=dashed]{->}{2}{3}
\ncarc{->}{3}{2}
\ncline[linecolor=red,linestyle=dashed]{->}{3}{4}
\end{pspicture}
\end{center}
\caption{\label{fig:example} An example for calculating $\safe_k$}
\end{wrapfigure}

As an example for $k$-sfrch-ty, consider the transition system with four states, 
including a single error state
(state $4$, marked by a double line) shown in Figure~\ref{fig:example}.
The controlled transitions are depicted as black full arrows, the uncontrollable (or fault) transitions are depicted as {\color{red} dashed arrows}.
For $G=S\smallsetminus F=\{1,2,3\}$, all states in $G$ are in $\safe_0(G)$.
For all $k \geq 1$, we have $\safe_k(G)=\{1,2\}$:
the protagonist can simply stay in $\{1,2\}$ during the safety phase of the game, and once the antagonist plays a failure transition, the game progresses into the reachability phase, where the reachability objective is satisfied immediately.
This outlines the difference between $k$-sfrch-ty and the linear time property
of being able to tolerate $k$ failures, which would be satisfied by state $1$ and $2$ only for $k\leq 2$ and $k \leq 1$, respectively.

This difference raises the question if the rules of our game are depriving the antagonist of some of the $k$ failures he should intuitively have.
The answer is that this is not the case if we use the $k$-resilient states (or, more generally, any fixed point of $\safe_k$) as $G$.
In this case, we would win again from the state we reached; instead of depriving the antagonist, one could say that we reset the number of failures he can play to $k$.

For a state to be in $\safe_k(G)$, the system has a strategy to recover to $G$, given that a long enough execution commenced without another failure happening.
We say that two successive failures are in the same \emph{group of dense failures} if the sequence of states separating them was not long enough for recovery in the respective safety/reachability game.
Vice versa, if two successive failures are far enough apart such that the protagonist can guarantee recovery in this game, then they do not belong to the same group.
In order to define $k$-resilience, we need to find a set of states $G$ such
that recovering to $G$ by the protagonist (the fault tolerance mechanism)
is always possible, provided that at most $k$ failures occurred.
To obtain this, observe that nesting 
$\safe_k$ once, i.e., $\safe_k(\safe_k(\cdot))$, corresponds
to tolerating up to two sets of up to $k$ close errors, and so forth.
Thus, $k$-resilience is simply the greatest fixed point of the operator
$\safe_k$ from Definition~\ref{k-safe}:
a state is $k$-resilient if it is in $\safe_k(\safe_k(\safe_k(\ldots\ \safe_k(G\big]$, using sufficiently deep nesting that a fixed point is reached.
For the control strategy, it suffices to use the control strategy from the outermost $\safe_k$.

\begin{lemma} 
\label{fixpoint}
$\safe_k$ has a greatest fixed point.
\end{lemma}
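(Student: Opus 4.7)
The natural route is Knaster–Tarski on the finite complete lattice $(2^{S \smallsetminus F}, \subseteq)$: it is enough to verify that $\safe_k$ is monotone, and then the greatest fixed point exists and equals the limit of the decreasing chain obtained by iterating $\safe_k$ from the top element $S\smallsetminus F$.

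The first step I would record is that $\safe_k$ is \emph{deflationary}, i.e., $\safe_k(G)\subseteq G$ for every $G\subseteq S\smallsetminus F$. This is immediate from Definition~\ref{k-safe}, which defines $\safe_k(G)$ as a subset of $G$. Together with monotonicity, this already guarantees that iterating $\safe_k$ from $S\smallsetminus F$ produces a descending chain $S\smallsetminus F = G_0 \supseteq G_1 \supseteq G_2 \supseteq \cdots$ in a finite lattice; this chain therefore stabilises after finitely many steps at some set $G^\ast$ with $\safe_k(G^\ast)=G^\ast$.

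The core of the proof is then the monotonicity claim: if $G\subseteq G'\subseteq S\smallsetminus F$, then $\safe_k(G)\subseteq \safe_k(G')$. I would prove this by a direct strategy-transfer argument. Fix $s\in\safe_k(G)$ and let $\sigma$ be a winning protagonist strategy from $s$ in the game with target $G$. I claim that the \emph{same} strategy $\sigma$ wins the game with target $G'$ from $s$. Indeed, in the safety phase the protagonist is only required to stay inside the target set; since $G\subseteq G'$, any play that stays inside $G$ under $\sigma$ also stays inside $G'$. In the reachability phase the protagonist has to return to the target after at most $k$ failures; any finite play that returns to $G$ also returns to $G'$. Infinite plays that remain for ever in the safety phase are winning under both targets. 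Hence $s\in\safe_k(G')$, establishing monotonicity.

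With monotonicity in hand, the greatest fixed point exists by Knaster–Tarski: it is $\bigcup\{G\subseteq S\smallsetminus F\mid G\subseteq \safe_k(G)\}$, and because the lattice is finite it coincides with the stabilisation $G^\ast$ of the descending iteration above. The only step that needs real thought is the strategy transfer for monotonicity, but this is straightforward because enlarging $G$ relaxes both the safety constraint and the reachability goal simultaneously; the rest is essentially bookkeeping on a finite lattice.
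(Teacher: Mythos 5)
Your proof is correct and follows essentially the same route as the paper: monotonicity of $\safe_k$ via the observation that a winning protagonist strategy for target $G$ remains winning for any larger target $G'\supseteq G$, combined with finiteness of the lattice $2^{S\smallsetminus F}$ (the paper leaves the Knaster--Tarski/iteration bookkeeping and the deflationary property implicit, but these are exactly the details you spell out).
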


\begin{proof}
Follows from the facts that
the function $\safe_k$ is monotonic
($G \subseteq G'$ implies $\safe_k ( G ) \subseteq \safe_k ( G' )$ because a winning strategy for the protagonist for $G$ is also a winning strategy for $G'$ for all states in $\safe_k(G)$) and operates
on a finite domain.
\qed
\end{proof}

As an example, consider again the transition systems with four states.
For $G=\{1\}$ ($\{1\}=\res_2(\{1,2,3\})$), the only state in $G$, state $1$, is $2$-resilient: it can recover with the recovery strategy to always go to the left.

\begin{construction}
\label{con:res}
$\res_k(G_0)$, for $G_0 = S \smallsetminus F$,
can be constructed by choosing $G_{i+1} = \safe_k(G_i)$ and fixing
$\res_k(G_0) = G_\infty$ to be the greatest fixed point of this construction.
\end{construction}
Note that this fixed point is what we are really interested in, while $\safe_k$ is a technical construction.
If this greatest fixed point $G = \safe_k ( G )$ is non-empty, the protagonist's strategy for 
the fixed point $G$ (guaranteeing eventual recovery to a state in $G$ within no more than $k$ failures, i.e., $k$-resilience) can be used to control the recovery mechanism, constraining its transitions to follow its winning strategy.

The natural control problem is to find optimal control that starts in the initial state $\iota$ of the transition system with failures.

\begin{definition}
For a non-error state $s \in S \smallsetminus F$, the \emph{resilience level $k_{\max}$} is the maximal $k$ such that $s$ is $k$-resilient.
\end{definition}

To illustrate this point on our running example, suppose we have $2k+1$ copies 
of the system, with the ability to perform majority checks
and identify the bad processes. According to the first, na\"ive
solution, there is nothing to do after $k$ failures. The
majority checks is still capable to maintain the correctness
of the combined behavior to follow the design of the original system $\statespace$.
However, there is no expectation that the system will be able
to recover at any point in the future into a situation where it is again
$k$-resilient; it {\em will} fail at the next round of faults.
Our dense fault tolerance criterion requires that, given no more failures
for enough time to allow recovery, the system will 
eventually recover to $k$-resilience again.

\subsection{Construction of $\safe_k$}
\label{ssec:kSafety}

Let $\safe_0(G) = \bigcup \{G' \subseteq G \mid  
\forall g \in G'\ \exists s \in G'.\ (g,s)\in \tau_c \}$. This
is the usual safety kernel of $G$ consisting of states from which there is
an infinite controlled sequence. It can be computed by the usual 
greatest fixed point construction.

\begin{lemma}
\label{lem:NLSafe0}
$\safe_0(G)$ can be constructed, together with a suitable memoryless
control strategy, in time linear in $(S,\tau_c)$, and testing if a state is in $\safe_0(G)$ is NL-complete.
\end{lemma}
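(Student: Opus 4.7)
The plan is to break the statement into three components: a linear-time construction algorithm, correctness of an extracted memoryless strategy, and NL-completeness of the decision version. All three reduce to standard greatest-fixed-point manipulations on $\tau_c$ restricted to $G$.

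For the construction, I would compute $\safe_0(G)$ as the greatest fixed point of the operator that removes from a set any state lacking a controlled successor in that set. Concretely, I maintain a counter $c(g) = |\suc_c(g)\cap H|$ for the current candidate set $H$, initialised by a single scan of $\tau_c$, together with a worklist of states whose counter is $0$. When such a state $g$ is extracted from $H$, I decrement $c(p)$ for every controlled predecessor $p$ of $g$ and enqueue $p$ if its counter drops to $0$. Each edge of $\tau_c$ is visited a constant number of times, so the total work is linear in $|S|+|\tau_c|$. During the scan I also record, for each surviving $g$, one arbitrary controlled successor that was still in $H$ when $g$ froze; this gives a memoryless control strategy, and because its image lies in $\safe_0(G)$, following it from any $g\in\safe_0(G)$ produces an infinite controlled path inside $\safe_0(G)\subseteq G$.

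For membership in NL, I would use the finiteness of $S$ to rephrase $s\in\safe_0(G)$ as the existence of a lasso: a controlled path from $s$ staying in $G$ that reaches some pivot $t\in G$, followed by a non-empty controlled cycle inside $G$ from $t$ back to $t$. A nondeterministic log-space machine guesses $t$ once, then guesses the path from $s$ to $t$ and the cycle from $t$ to $t$ one step at a time, keeping only the current state, the pivot, and two step counters bounded by $|S|$. For NL-hardness I plan to reduce from $st$-reachability on directed acyclic graphs, a standard NL-complete problem. Given an instance $(V,E,s,t)$ I let $S=G=V$, $F=\emptyset$, $\tau_u=\emptyset$, and $\tau_c = E \cup \{(t,t)\}$; since $E$ is acyclic, the only controlled cycle in $G$ is the added self-loop at $t$, so $s\in\safe_0(G)$ iff $s$ reaches $t$ in $E$. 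The step that needs the most care is the lasso reformulation, because I have to argue that the existence of \emph{some} infinite controlled path from $s$ in $G$ is faithfully captured by \emph{some} choice of pivot $t$, rather than requiring the infinite path from $s$ to already lie on a single cycle; finiteness of $S$ and a pigeonhole argument on the first repeated state dispense with this concern.
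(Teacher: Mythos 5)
Your proof is correct and follows essentially the same route as the paper's (which is much terser): a standard worklist-based greatest-fixed-point computation for the linear-time bound, the ``stay inside the computed set'' memoryless strategy, and a lasso / repeated $ST$-reachability characterisation of $\safe_0(G)$ for NL membership together with a reachability reduction for hardness. The only nit is that in your hardness instance the sinks of the acyclic graph have no controlled successors, violating the paper's well-formedness requirement that every non-error state have at least one controlled successor; adding a single fresh state $d\notin G$ with a self-loop and a controlled edge $(v,d)$ for every vertex $v$ repairs this without affecting the reduction.
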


\begin{proof}
Linear time is obvious for standard constructions of (least and)
greatest fixed points, and for the memoryless control 
strategy it suffices to stay in $\safe_0(G)$.

NL completeness can be shown by reduction to and from the repeated ST-reachability \cite{Papadimitriou/94/complexity} (the question whether there is a path from a state S to a state T and from T to itself in a directed graph).
\qed
\end{proof}

An intermediate step for the construction of $k$-sfrch states is an 
attractor construction that stays, through controlled moves, in a 
subset $L\subseteq S\smallsetminus F$ of non-error states. As only
controlled moves are allowed, this is merely a backwards reachability cone.

\begin{definition}
The \emph{controlled limited attractor set} of a set $G$ for a
limited region $L\subseteq S$, denoted $\cla_L(G)$ is the set 
$\cla_L(G) = \bigcap \{A \supseteq G \mid 
\forall s \in L.\ (\suc_c(s) \cap A \neq \emptyset ) \mbox{ implies } s \in A\}$ for which there is a strategy to move to $G$ without leaving $L$.
\end{definition}

The controlled limited attractor set 
$\cla_L(G)$ can be constructed using simple backwards reachability 
for $G$ of controlled transition through states of $L$.
\begin{lemma}
\label{lem:NLcla}
$A = \cla_L(G)$ and (a memoryless) attractor strategy for the states 
in $A\smallsetminus G$ towards $G$ can be constructed in time linear in 
the size of $(S,\tau_c)$. Determining whether a state is in $A$ 
is NL-complete (see \cite{Papadimitriou/94/complexity}). \qed
\end{lemma}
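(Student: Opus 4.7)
The plan is to prove all three claims by exhibiting a single backward fixed-point construction and then analysing its complexity. Because the attractor uses only controlled transitions, there is no real adversary in this part: from the protagonist's point of view the definition reduces to ``can I pick a sequence of controlled edges, all of whose intermediate vertices lie in $L$, that ends in $G$?''. I would make this observation explicit first, since it collapses the game-theoretic flavour of the definition to plain reachability and drives both the algorithmic and the complexity arguments.

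For the construction I would compute the increasing chain $A_0 = G$, $A_{i+1} = A_i \cup \{ s \in L \mid \suc_c(s) \cap A_i \neq \emptyset \}$, and set $A = \bigcup_i A_i$. To realise this in linear time I would use the standard worklist scheme: maintain for every $s$ a reverse adjacency list of controlled predecessors, initialise the worklist with $G$, and whenever a state is popped, scan its controlled predecessors and push those in $L$ that have not been marked yet. Each edge of $\tau_c$ is inspected at most once, so the whole procedure is $O(|S| + |\tau_c|)$. Correctness is an easy two-way inclusion: every $A_i$ is contained in any closed $A' \supseteq G$ (induction on $i$), and $A$ itself satisfies the closure property, so $A = \cla_L(G)$.

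The memoryless strategy is recorded during the construction: at the moment state $s \in L$ is added to $A_{i+1}$ I simply store one witness successor $\sigma(s) \in A_i \cap \suc_c(s)$. Following $\sigma$ strictly decreases the index $i$ at which the current state was added, so after at most $|S|$ steps the pebble is in $A_0 = G$; and since the witness is a controlled transition into $A \subseteq L \cup G$, the path never leaves $L$ before reaching $G$. This yields a memoryless attractor strategy computable in the same time bound.

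For the complexity claim, NL membership follows from the reachability reformulation: a log-space nondeterministic machine guesses vertices along a controlled path of length at most $|S|$, checking at each step that the current vertex is in $L$ (except possibly the final one, which must be in $G$) and that the guessed edge is in $\tau_c$. For NL-hardness I would reduce from standard directed $s$-$t$-reachability: given a digraph $D$ and vertices $s,t$, let $S$ be the vertices of $D$ together with a fresh sink state taking the role of $F$ (unused here), let $\tau_c$ be the edges of $D$, let $L = S \smallsetminus \{t\}$, and let $G = \{t\}$; then $s \in \cla_L(G)$ iff $t$ is reachable from $s$ in $D$. The only mildly delicate point is handling the boundary between $L$ and $G$ so that $t$ itself is not excluded by the ``stay in $L$'' requirement, which the choice $L = S \smallsetminus G$ takes care of. I expect this boundary bookkeeping, rather than any deep argument, to be the main place where a careless formulation could go wrong.
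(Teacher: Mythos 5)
Your proof is correct and matches the paper's intended argument: the paper gives no explicit proof beyond remarking that $\cla_L(G)$ is ``simple backwards reachability for $G$ of controlled transitions through states of $L$'' and citing Papadimitriou for the NL-completeness of reachability, which is precisely the linear-time worklist computation, witness-recording strategy, and ST-reachability reduction you spell out. The only cosmetic detail worth adding is that dead-end vertices of $D$ should receive a controlled self-loop so the resulting structure satisfies the paper's requirement that every non-error state has at least one controlled successor; this does not affect reachability of $t$ and hence does not disturb the reduction.
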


The controlled limited attractor set is used in the construction 
of $\safe_k(G)$.  We
further construct a descending chain $A_0 \supseteq A_1 \supseteq \ldots
\supseteq A_{k-1}$ of limited attractors $A_i$. From $A_i$ we have an
attractor strategy towards $G$ for the protagonist, which can tolerate 
up to $i$ further failures.
The respective $A_i$ are attractors that avoid error states 
and, for $i>1$, any uncontrolled transition leads to $A_{i-1}$

\begin{definition}
A state $s\in S$ is \emph{fragile} for a set $B\subseteq S$ 
if at least one of its uncontrolled successors is in $B$.
The set of fragile states for $B$ is $\frag(B) =
\{s \in S \mid \exists b \in B.\ (s,b)\in \tau_u\}$.
\end{definition}

The set $\frag(B)$ is easy to construct.
The limited regions $L_i$ of states allowed when approaching $G$ also form a descending chain $L_0 \supseteq L_1 \supseteq \ldots \supseteq L_k$.
Using these building blocks, we can compute the $k$-sfrch states:

\begin{construction}
\label{con:safe}
Starting with $L_0 = S \smallsetminus F$ (if no further failures are allowed for, only error states are disallowed on the way to $G$), we define the $A_i$'s and $L_i$'s recursively by
\begin{itemize}
\item $A_i=\cla_{L_i}(G)$ and

\item $L_{i+1}= L_0 \smallsetminus \frag(S\smallsetminus A_i)$,
\end{itemize}
and choose $\safe_k(G) = \safe_0(G \cap L_k)$.
\end{construction}

\noindent {\bf Explanation.}
The states in $L_{i+1}$ are the non failure states from which all 
(i.e., fault) uncontrolled transitions lead to a state in $A_i$.
The sets $A_i$ contain the states from which there is a
controlled path to $G$ that progresses through $L_i$; all uncontrolled transitions originating from any state of this path lead to $A_{i-1}$.
$A_0$ is therefore just the set of states from which there is a controlled path to $G$.

From all states in $A_{k-1}$, the protagonist therefore has a winning strategy in the second phase of the game described earlier:
if the antagonist can play at most $k-1$ failures, then the protagonist can make sure that $G$ is reached.

Finding a control strategy for $k$-sfrch control within $\safe_k(G)$ is simple:
as long as we remain in $\safe_k(G)=\safe_0(G \cap L_k)$, we can choose any control action that does not leave $\safe_k(G)$.
Once $\safe_k(G)$ is left through an uncontrolled transition to $A_{k-1},
A_{k-2}, ...$, we determine the maximal $i$ for which it holds that we are in $A_i$ and follow the attractor strategy of $\cla_{L_i}(G)$ towards $G$.

\subsection{Complexity}
All individual steps in the construction 
(intersection, difference, predecessor, and attractor) are linear in the 
size of the transition system with failures, and there are $O(k)$ of these 
operations in the construction. This provides a bi-linear (linear in $k$ and $|\cal T|$) algorithm for the construction of $\safe_k$ and a strategy for the protagonist:

\begin{lemma}
\label{lem:costSafe}
A memoryless control strategy for the states in $\safe_k(G)$ can be
constructed in time linear in both $k$ and the 
size $|\cal T|$ of the transition system with failures $\cal T$.
\qed
\end{lemma}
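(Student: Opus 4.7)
The plan is to bound the cost of Construction~\ref{con:safe} directly and then argue that the strategies it produces compose into a single memoryless strategy. The construction performs $k$ rounds, each of which (i)~builds $A_i = \cla_{L_i}(G)$ together with its attractor strategy in time $O(|\cal T|)$ by Lemma~\ref{lem:NLcla}, and (ii)~computes $L_{i+1} = L_0 \smallsetminus \frag(S \smallsetminus A_i)$, which only requires one scan over the uncontrolled transitions to obtain $\frag$ and one linear-time set difference. One final call to $\safe_0(G \cap L_k)$ costs $O(|\cal T|)$ by Lemma~\ref{lem:NLSafe0} and also returns a memoryless strategy. Summing these $k+1$ linear-time steps yields the stated $O(k \cdot |\cal T|)$ bound.

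For the strategy, the idea is that the current state alone tells us which sub-strategy to invoke: membership in each $A_i$ and in $\safe_k(G)$ is a static, precomputed property. Concretely, if the current state $s$ lies in $\safe_k(G) = \safe_0(G \cap L_k)$, we apply the memoryless $\safe_0$-strategy, which keeps play inside $\safe_k(G) \subseteq L_k$ for as long as the antagonist does not play an uncontrolled transition. As soon as he does, the definition of $L_k$ (removing $\frag(S \smallsetminus A_{k-1})$ from $L_0$) guarantees that the successor lies in $A_{k-1}$, and we switch to the memoryless attractor strategy of $\cla_{L_{k-1}}(G)$. Inductively, each further failure from a state in $L_i$ lands in $A_{i-1}$, and we switch to the strategy of $\cla_{L_{i-1}}(G)$; after at most $k$ failures the play still reaches $G$ through $A_0$. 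Since each switch is triggered solely by the current state (namely, by the largest $i$ with $s \in A_i$, or by $s \in \safe_k(G)$), the composite is memoryless.

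The main thing to check is not really a mathematical obstacle but careful bookkeeping: one must verify that the $O(|\cal T|)$ bounds of Lemmas~\ref{lem:NLSafe0} and~\ref{lem:NLcla} actually compose, which holds once the intermediate sets are stored as characteristic bit-vectors so that intersections, differences and neighbourhood scans remain linear per round. With this representation, both the construction of $\safe_k(G)$ and the extraction of the composite memoryless strategy run in time linear in $k$ and in $|\cal T|$, as claimed.
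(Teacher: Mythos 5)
Your proposal is correct and follows essentially the same route as the paper: the paper justifies the time bound by noting that Construction~\ref{con:safe} consists of $O(k)$ linear-time set operations and attractor computations, and it defines the memoryless strategy exactly as you do --- stay inside $\safe_0(G\cap L_k)$ until a failure occurs, then follow the attractor strategy of $\cla_{L_i}(G)$ for the maximal $i$ with the current state in $A_i$. Your additional bookkeeping about bit-vector representations only makes explicit what the paper leaves implicit.
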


The complexity of determining whether or not a state $s$ is in $\safe_k(G)$ depends on whether or not we consider $k$ to be a fixed parameter.
Considering $k$ to be bounded (or fixed) is natural in our context, because $k$ is bounded by 
the redundancy.

\begin{lemma}
\label{lem:NLSafek}
For a fixed parameter $k$, testing if a state $s$ of a transition system
${\cal T} =(S,\iota,\tau_c,\tau_u,F)$ is in $\safe_k(G)$ is NL-complete.
\qed
\end{lemma}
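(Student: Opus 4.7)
The plan is to establish the upper bound by unfolding Construction~\ref{con:safe} into a constant-depth stack of NL subroutines, and to establish NL-hardness by a trivial reduction from the known hardness of $\safe_0$.

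For the upper bound, I would proceed by induction on $i \in \{0,1,\dots,k\}$ to show that the predicates ``$t \in L_i$'' and ``$t \in A_i$'' each admit an NL decision procedure. The base case $L_0 = S\smallsetminus F$ is a constant-time local check. For the inductive step, membership in $A_i = \cla_{L_i}(G)$ is exactly the limited-attractor problem of Lemma~\ref{lem:NLcla}, where each on-the-fly test ``is the guessed intermediate state in $L_i$?'' is resolved by invoking the NL procedure for $L_i$ given by induction. Membership in $L_{i+1} = L_0 \smallsetminus \frag(S\smallsetminus A_i)$ unfolds to ``$t\in L_0$ and every $u$ with $(t,u)\in\tau_u$ satisfies $u\in A_i$'', which is a universal quantifier over polynomially many successors each certified by the NL procedure for $A_i$; by Immerman--Szelepcs\'enyi (NL $=$ coNL), this complement of an NL predicate is again in NL. Finally, $s \in \safe_k(G) = \safe_0(G\cap L_k)$ reduces to the repeated ST-reachability problem of Lemma~\ref{lem:NLSafe0} executed in the subgraph of controlled transitions, with every membership query into $G\cap L_k$ answered by the NL procedure for $L_k$.

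Because $k$ is fixed, the induction terminates after a constant number of levels, and the whole decision procedure is a constant-depth composition of NL subroutines. To avoid invoking any $\mathrm{NL}^{\mathrm{NL}}=\mathrm{NL}$ collapse as a black box, I would inline the constructions into a single nondeterministic log-space machine: the outer procedure guesses its certificate for $\safe_0$ one state at a time, and whenever the predicate $L_k$ must be evaluated it spawns (sequentially, reusing its work tape) a fresh nondeterministic log-space computation using the standard Immerman--Szelepcs\'enyi certificate to verify the universal clause in the definition of $L_k$, which itself recurses into a log-space verification of $A_{k-1}$, and so on. Since the nesting depth is the constant $k$, only $O(k)\cdot O(\log|{\cal T}|) = O(\log|{\cal T}|)$ bits of work tape are consumed, as required.

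For NL-hardness, I would reduce from the NL-hardness of $\safe_0$ established in Lemma~\ref{lem:NLSafe0}. Given any instance $({\cal T}, G, s)$ of $\safe_0$-membership, extend it to a transition system with failures by taking $\tau_u = \emptyset$ and keeping everything else unchanged. Then $\frag(\,\cdot\,) = \emptyset$, so inductively $L_i = L_0 = S\smallsetminus F$ and $A_i = \cla_{L_0}(G)$ for all $i\le k$; hence $\safe_k(G) = \safe_0(G\cap L_k) = \safe_0(G)$, and $s\in\safe_0(G)$ iff $s\in\safe_k(G)$. The reduction is clearly log-space.

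The main obstacle is being precise about the composition of NL procedures so as not to silently use an NL oracle model. I expect the cleanest way is the inlined, bounded-depth recursion sketched above, which keeps the work tape logarithmic because $k$ is fixed, and which uses Immerman--Szelepcs\'enyi only in its plain ``coNL $\subseteq$ NL'' form at each level to handle the universal quantifier hidden inside $\frag$.
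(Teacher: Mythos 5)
Your proposal is correct and follows essentially the same route as the paper: an induction over the sets $L_i$ and $A_i$ of Construction~\ref{con:safe}, invoking the Immerman--Szelepcs\'enyi closure of NL under complementation to handle the universal quantifier hidden in $\frag$ at each of the constantly many induction steps, with hardness inherited from the NL-hardness of $\safe_0$ (Lemma~\ref{lem:NLSafe0}). The care you take to inline the nested nondeterministic subroutines rather than appeal to an oracle model is exactly the right precaution and matches the intent of the paper's appendix argument.
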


The proof is based on an inductive argument that uses the closure of NL under complementation \cite{Immerman/88/NCoNL} in the induction step.
The details are moved to an appendix.

If $k$ is considered an input, then reachability in AND-OR graphs can easily be encoded in LOGSPACE:
It suffices to use the nodes of an AND-OR graph as the states, the outgoing edges of AND and OR nodes as controllable and uncontrollable edges, respectively, and then add a self loop for each OR node to the set of controllable edges.
Choosing $k$ to be the number of nodes of the AND-OR graph and $F$ to be the target nodes of the AND-OR graph, a state in the AND-OR graph is not reachable from a designated root node, if the respective state in the resulting transition system is $k$-sfrch.

Given that reachability in AND-OR graphs is PTIME-complete \cite{immerman/81/reachability}, this provides:

\begin{lemma}
\label{lem:ptc}
If $k$ is considered an input parameter, then testing if a state $s$ of a transition system with failures $\mathcal T=(S,\iota,\tau_c,\tau_u,F)$ is in $\safe_k(G)$ is PTIME-complete.
\qed
\end{lemma}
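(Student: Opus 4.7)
The plan is to prove both directions of the PTIME-completeness claim separately. The PTIME upper bound falls out of Lemma~\ref{lem:costSafe} directly: since that lemma gives a construction of $\safe_k(G)$ in time linear in both $k$ and $|\mathcal{T}|$, membership is decidable in polynomial time even when $k$ is part of the input, because the input size grows at least as $k+|\mathcal{T}|$. All the work therefore lies in the hardness direction.

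For PTIME-hardness I will flesh out the reduction from AND-OR graph reachability sketched immediately before the lemma. Given an AND-OR graph with nodes $V$, designated root $r$, and target set $T$, the reduction builds $\mathcal{T}=(V, r, \tau_c, \tau_u, T)$ by putting the outgoing edges of every AND node into $\tau_c$, adding a self-loop at every OR node to $\tau_c$, and putting the outgoing edges of every OR node into $\tau_u$. Setting $G=V\setminus T$ and $k=|V|$, I will show that $r\in \safe_k(G)$ iff $r$ is \emph{not} AND-OR reachable to $T$. PTIME-hardness of $\safe_k$-membership then follows from the PTIME-hardness of AND-OR reachability, using closure of PTIME under complement.

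The easy half of the equivalence, ``not reachable $\Rightarrow$ sfrch'', is handled by a simple memoryless protagonist strategy: at an AND node $v$, pick a successor that is also not AND-OR reachable (one exists because $v$ is an unreachable AND node, so the AND-semantics forces at least one successor to be unreachable); at an OR node, play the self-loop. This strategy keeps the play inside the set of non-reachable nodes, which is contained in $G$, so the safety objective is preserved. Should the antagonist play a failure at an OR node $v$, every $t\in \suc_u(v)$ is also non-reachable --- otherwise $v$ would itself be AND-OR reachable --- so the play lands back in $G$ and discharges the reachability objective immediately.

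The main obstacle is the converse direction, where I need to argue that $k=|V|$ failures suffice for the antagonist to drive the pebble into $T$ when $r$ is AND-OR reachable. For this I will use the standard AND-OR rank: target nodes have rank $0$, an OR node's rank is one plus the minimum rank of its successors, and an AND node's rank is one plus the maximum; a short fixed point argument bounds the rank of every reachable node by $|V|-1$. The antagonist's strategy is to play a failure at every visited OR node, moving to a minimum-rank successor. At AND nodes the protagonist is forced into one of the successors, all reachable and of strictly smaller rank than the AND node itself. Thus every step strictly decreases the rank, so a target in $F$ is reached within $|V|$ moves, at most $|V|=k$ of which consume failures. Because the play terminates in $F=T$ the protagonist loses, whether the game is still in its safety phase or has already entered the reachability phase, which closes the reduction.
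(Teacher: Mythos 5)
Your membership argument and the general shape of the reduction follow the paper's sketch, but the hardness direction contains a genuine gap: with your choice $G=V\setminus T=S\setminus F$, the game degenerates and your antagonist strategy is not actually available to him. Recall that as soon as the antagonist plays his first failure, the game enters the reachability phase and the protagonist \emph{wins immediately once the pebble sits on a state of $G$ again} (this is precisely how the paper's four-state example obtains $\safe_k(\{1,2,3\})=\{1,2\}$ for all $k\ge 1$). With $G=S\setminus F$, the state reached by any failure is either in $F$ (antagonist wins) or already back in $G$ (protagonist wins on the spot), so the antagonist never benefits from a second failure: in Construction~\ref{con:safe} one gets $A_i=\cla_{L_i}(G)\supseteq G=S\setminus F$, hence $A_i=S\setminus F$ for all $i$ and $L_i=(S\setminus F)\setminus\frag(F)$ for all $i\ge 1$, so $\safe_k(S\setminus F)=\safe_0\bigl((S\setminus F)\setminus\frag(F)\bigr)$ is independent of $k\ge 1$ and is an NL computation, not a PTIME-hard one. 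Concretely, your rank-decreasing strategy breaks after its first move: take a chain of OR nodes $v_1\to v_2\to t$ with $t\in T$. Then $v_1$ is AND-OR reachable, but the only failure from $v_1$ lands in $v_2\in G$, so the protagonist has already won and $v_1\in\safe_k(V\setminus T)$ for every $k$, contradicting your claimed equivalence. (A smaller, fixable issue: edges of AND nodes into $T$ cannot be placed in $\tau_c$ at all, since $\tau_c\subseteq(S\setminus F)\times(S\setminus F)$; and your polynomial-time claim for membership implicitly assumes $k$ is given in unary, which one should justify by noting that $k\le|S|$ may be assumed without loss of generality.)

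To make the reduction work you must choose a $G$ that is \emph{hard to return to}, so that the budget of $k$ failures is genuinely exercised during the reachability phase rather than exhausted after one step. For instance, one can introduce a fresh goal state $g$ with a controlled self-loop and a fresh initial state $\iota_0$ whose single uncontrolled edge enters the AND-OR graph at the root, set $G=\{\iota_0,g\}$, reroute targets through uncontrolled edges into a sink of $F$, and give the protagonist controlled escape edges to $g$ only at positions where the antagonist can veto them by spending a failure; then each antagonist choice in the alternating reachability game costs one failure, $k=|V|$ suffices, and your rank argument applies to the post-failure phase. As written, however, the converse direction of your equivalence is false, so the hardness claim is not established.
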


The construction of $\res_k(G)$ uses the repeated execution of $\safe_k(\cdot)$.
The execution of $\safe_k(\cdot)$ needs to be repeated at most $O(|G|)$ times until a fixed point is reached, and each execution requires at most $O(k\cdot|\cal T|)$ steps by Lemma~\ref{lem:costSafe}.

For the control strategy, we can simply use the control strategy from $\safe_k(G_\infty)$ from the fixed point $G_\infty=\res_k(G)$.
This control strategy is memoryless (cf.\ Lemma \ref{lem:costSafe}).
\begin{lemma}
\label{lem:costRes}
$\res_k(G)$ and a memoryless $k$-resilient control strategy 
for $\res_k(G)$ can be constructed in $O(k\cdot |G| \cdot |\cal T|)$ time.
\qed
\end{lemma}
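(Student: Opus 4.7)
The plan is to combine Construction~\ref{con:res} with the complexity bound of Lemma~\ref{lem:costSafe}, using monotonicity to bound the number of outer iterations. First I would recall that $\res_k(G_0)$ is defined as the greatest fixed point of $\safe_k$ reached by the descending chain $G_0 \supseteq G_1 \supseteq G_2 \supseteq \ldots$, where $G_{i+1} = \safe_k(G_i)$. By the monotonicity of $\safe_k$ (used in the proof of Lemma~\ref{fixpoint}), this chain is indeed descending, and since $\safe_k(G_i) \subseteq G_i$, any step that is not yet the fixed point strictly decreases the cardinality of the current set by at least one. Hence the fixed point $G_\infty$ is reached after at most $|G_0| = |G|$ iterations.

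Next I would bound the cost per iteration. Each computation $G_{i+1} = \safe_k(G_i)$ costs $O(k \cdot |\mathcal T|)$ time by Lemma~\ref{lem:costSafe}, and the set operations needed to keep track of the current $G_i$ are linear in $|\mathcal T|$, hence absorbed in this bound. Multiplying by the at most $|G|$ iterations yields the claimed $O(k \cdot |G| \cdot |\mathcal T|)$ overall running time for $\res_k(G)$ itself.

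It remains to explain the memoryless $k$-resilient control strategy. Once the fixed point $G_\infty = \res_k(G)$ has been reached, we run $\safe_k$ one last time on $G_\infty$ (which by definition returns $G_\infty$) and extract the memoryless control strategy produced by Construction~\ref{con:safe}, whose existence and linear-time constructibility are guaranteed by Lemma~\ref{lem:costSafe}. Since playing according to this strategy from any state in $G_\infty$ keeps the play inside $\safe_k(G_\infty) = G_\infty$ during the safety phase and drives it back to $G_\infty$ after any block of up to $k$ faults, iterating this behaviour infinitely often witnesses $k$-resilience without any need to remember past choices.

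The only subtlety I anticipate is justifying the bound of $|G|$ outer iterations cleanly: one must observe that $\safe_k$ is not only monotone but also contractive in the sense $\safe_k(G_i) \subseteq G_i$ (which follows directly from Construction~\ref{con:safe}, since $\safe_k(G_i) = \safe_0(G_i \cap L_k) \subseteq G_i$), so that termination is indeed governed by strict shrinkage rather than by some worst-case lattice height argument. With this observation in hand, the rest is a straightforward accounting of the per-iteration cost stated in Lemma~\ref{lem:costSafe}.
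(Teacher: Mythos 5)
Your proposal is correct and follows essentially the same route as the paper: the paper likewise bounds the number of outer iterations of $\safe_k$ by $O(|G|)$ via the descending chain of Construction~\ref{con:res}, charges $O(k\cdot|\mathcal T|)$ per iteration by Lemma~\ref{lem:costSafe}, and takes the memoryless strategy from the final application $\safe_k(G_\infty)$ at the fixed point. Your added remarks on contractivity ($\safe_k(G_i)\subseteq G_i$) and on why the extracted strategy witnesses $k$-resilience are just explicit versions of what the paper leaves implicit.
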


The complexity class is (almost) independent of the parameter $k$:

\begin{lemma}
The problem of checking whether or not a state $s$ is $k$-resilient for a set $G$ is PTIME-complete for all $k>0$ and NL-complete for $k=0$.
\end{lemma}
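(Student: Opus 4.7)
The proof splits along the two regimes. For $k=0$, I claim that $\safe_0$ is idempotent on $2^{S\smallsetminus F}$: by construction, $\safe_0(G)$ is the largest subset of $G$ in which every state has a controlled successor inside the set, hence it is already closed under $\safe_0$ and $\safe_0(\safe_0(G))=\safe_0(G)$. Consequently $\res_0(G)=\safe_0(G)$, and Lemma~\ref{lem:NLSafe0} supplies NL-completeness directly. For $k\geq 1$, the PTIME upper bound comes for free from Lemma~\ref{lem:costRes}, so only the matching lower bound requires work.

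For PTIME-hardness at each fixed $k\geq 1$, I would reduce from the complement of reachability in AND-OR graphs, which is PTIME-complete by \cite{immerman/81/reachability}. Given an AND-OR graph $(V_\wedge \uplus V_\vee, E, T, v_0)$ (WLOG every node has at least one successor), put $S = V_\wedge \cup V_\vee \cup \{err\}$, $F = T \cup \{err\}$, $G = S\smallsetminus F$, and define the transitions so that each OR node $v$ has a controlled self-loop together with an uncontrolled edge to every AND-OR successor, while each AND node has only controlled edges to its AND-OR successors. The reduction hinges on the claim: $v_0 \in \res_k(G)$ iff $v_0$ is not winning in the AND-OR game.

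One direction is a direct verification that the set $H$ of non-winning nodes is a fixed point of $\safe_k$: for an OR node $v \in H$, all AND-OR successors are non-winning and hence in $H \subseteq A_i(H)$ for every $i\leq k$, so $v \in L_k(H)$; the controlled self-loop keeps $v$ inside $H\cap L_k$, giving $v \in \safe_0(H\cap L_k)=\safe_k(H)$. For an AND node $v\in H$, some successor $v_j$ is non-winning, and the controlled move $v\to v_j$ supplies both an infinite controlled path inside $H$ and membership in each $A_i(H)$. The harder direction, and the main obstacle, is to rule out winning nodes from $\res_k(G)$, because the attractor $A_0(H')$ of a hypothetical fixed point $H'$ can in principle exceed $H'$ along chains of controlled AND-edges. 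My plan is to argue by induction on the AND-OR winning rank: a target sits in $F$ and is therefore outside every $A_i$; a winning OR node of minimum rank hiding in $H'$ would have to see one of its (winning) uncontrolled successors in $A_{k-1}(H')$, but a winning OR there must itself lie in $H'$ since its only controlled outgoing edge is a self-loop, contradicting minimality; a winning AND in $A_{k-1}(H')$ is forced to exhibit a winning child in $A_{k-1}(H')$ of strictly smaller rank, and iterating this step descends to the OR or target case. This establishes $\res_k(G)=H$ for every fixed $k\geq 1$, completes the PTIME-hardness, and together with the upper bound yields the statement.
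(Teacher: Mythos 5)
Your proposal is correct and follows essentially the same route as the paper: the PTIME upper bound from the $O(k\cdot|G|\cdot|\mathcal T|)$ construction (with $k\leq|S|$), idempotence of $\safe_0$ giving $\res_0(G)=\safe_0(G)$ and hence NL-completeness for $k=0$, and for $k>0$ the same AND-OR-graph reduction (controlled AND-edges, uncontrolled OR-edges plus controlled self-loops) that the paper invokes for $k$-sfrch-ty. The paper states the hardness reduction in one line without verifying it for fixed $k\geq 1$; your fixed-point/rank-induction argument showing $\res_k(G)$ equals the set of non-winning nodes is exactly the missing verification, and it is sound (modulo the harmless technicality, shared with the paper, that controlled edges into target nodes formally violate $\tau_c\subseteq(S\smallsetminus F)\times(S\smallsetminus F)$ and would need a trivial preprocessing of the AND-OR graph).
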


\begin{proof}
For inclusion in PTIME, see the previous lemma. (Note that we can assume $k \leq |S|$ without loss of generality.)
For hardness in the $k>0$ case, we can use the same reduction from the reachability problem in AND-OR graphs as for $k$-sfrch-ty.

For $k=0$, $\safe_0(G)=\safe_0\big(\safe_0(G)\big)$ implies $\res_0(G)=\safe_0(G)$.
The problem of checking if a state is in $\res_0(G)$ is therefore NL-complete by Lemma \ref{lem:NLSafe0}.
\qed
\end{proof}

Finding the resilience level $k_{\max}$ for the initial state $\iota$
requires at most $O(\log k_{\max})$ many constructions of $\res_i(G)$.
We start with $i=1$, double the parameter until $k_{\max}$ is exceeded, and then use logarithmic search to find $k_{\max}$.

\begin{corollary}
For the initial state $\iota$, we can determine the resilience level $k_{\max} =
\max\{n\in \mathbb N_0 \mid \iota \in \res_n(S \smallsetminus F)\}$ of $\iota$,
$\res_{k_{\max}}(S \smallsetminus F)$, and a memoryless $k_{\max}$-resilient control
strategy for $\res_{k_{\max}}(S\smallsetminus F)$ in $O(|S| \cdot |{\cal T}|\cdot k_{\max} \log k_{\max})$ time.
\qed
\end{corollary}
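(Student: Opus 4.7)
The plan is to combine the cost bound of Lemma \ref{lem:costRes} with a standard doubling-plus-binary-search schema, once I have shown that $k$-resilience is monotone decreasing in $k$.

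First I would establish the monotonicity needed to justify binary search. From the definition of the safety/reachability game, the operator $\safe_k$ is anti-monotone in $k$: if $k' \le k$, the antagonist in the $k'$-game is strictly more restricted than in the $k$-game, so every winning strategy for the protagonist at parameter $k$ is also winning at parameter $k'$, giving $\safe_k(G)\subseteq\safe_{k'}(G)$. Combining this with the monotonicity of $\safe_k$ in $G$ (Lemma \ref{fixpoint}) and taking greatest fixed points, I obtain $\res_k(S\smallsetminus F)\subseteq \res_{k'}(S\smallsetminus F)$ whenever $k'\le k$. Hence $\{k\in\mathbb N_0\mid \iota\in\res_k(S\smallsetminus F)\}$ is downward closed, which is exactly the property required to binary-search for its maximal element $k_{\max}$.

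Next I would describe the algorithm explicitly and cost it using Lemma \ref{lem:costRes}, which gives $O(k\cdot|G|\cdot|\mathcal T|) = O(k\cdot|S|\cdot|\mathcal T|)$ per call to $\res_k(S\smallsetminus F)$. In the doubling phase, I compute $\res_{2^j}(S\smallsetminus F)$ and test $\iota\in\res_{2^j}(S\smallsetminus F)$ for $j=0,1,2,\ldots$ until the test fails; this happens for the first $j$ with $2^j > k_{\max}$, so by the geometric series the total cost of this phase is $O\bigl((1+2+4+\cdots+2^j)\cdot|S|\cdot|\mathcal T|\bigr)=O(k_{\max}\cdot|S|\cdot|\mathcal T|)$. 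In the subsequent binary search over the interval $[2^{j-1},2^j]$, there are $O(\log k_{\max})$ queries, each of cost at most $O(k_{\max}\cdot|S|\cdot|\mathcal T|)$, giving the bound $O(|S|\cdot|\mathcal T|\cdot k_{\max}\log k_{\max})$ for the whole search.

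Finally I would note that the set $\res_{k_{\max}}(S\smallsetminus F)$ and the accompanying memoryless $k_{\max}$-resilient control strategy are obtained for free: the very last successful query is a call to $\res_{k_{\max}}(S\smallsetminus F)$, which by Lemma \ref{lem:costRes} already produces both the set and a memoryless strategy within the same time bound. No step presents a real obstacle; the only subtle point is the anti-monotonicity in $k$ that legitimises the binary search, everything else is a routine accounting of the geometric series and $O(\log k_{\max})$ binary-search steps against the per-call cost from Lemma \ref{lem:costRes}.
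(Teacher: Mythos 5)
Your proposal is correct and follows essentially the same route as the paper: doubling the parameter until $k_{\max}$ is exceeded, then logarithmic search, with each call to $\res_i(S\smallsetminus F)$ costed via Lemma~\ref{lem:costRes} and the final call yielding the set and memoryless strategy. The only addition is your explicit anti-monotonicity argument in $k$, which the paper leaves implicit but which is indeed the property that legitimises the search.
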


\subsection{Handling Recovery Delay
\label{subsec.a.ra}
}

To keep the representation concise, we distinguish three types of transitions instead of two:
failure, control, and repair transitions.
Repair transitions are denoted $\tau_r \subseteq (S \smallsetminus F ) \times (S \smallsetminus F)$ and $\tau= \tau_c \cup \tau_r \cup \tau_u$ is the new set of transitions.
Repair transitions intuitively refer to good events; in our running example, e.g., the completion of a recovery.

Repair transitions cannot be added to the fault tolerance mechanism, unless we keep track of \emph{when} they are scheduled to happen, which led to significant blow-up of the state space.
But they do not represent failures; to the contrary, they represent improvements that help the fault tolerance player and should therefore not be considered in the quota allowed for the failure mechanism to win (towards an unrecoverable situation).

To illustrate this, consider our running example from Section \ref{sec:faulTolerance}, where some component recovery of a memory unit is {\em initiated} by the fault tolerance mechanism, assigning a process for that recovery. Such a component recovery cannot be modeled as immediate (atomic), because such a model would mask the competition on the free processes, leading to the trivial and unrealistic optimal strategies based on instant component recovery from every fault.
Thus, transitions which represent that a component recovery {\em terminates} should not be under the control of the fault tolerance mechanism.
On the other hand, these transitions help the fault tolerance mechanism, hence they must not come on the expense of the $k$ allowed attempts of the failure mechanism.
 
We change the game from the definition of \emph{sfrch$_k$} states as follows.
In a state $s$, the protagonist chooses a set of states contained in $\suc_c(s) \cup \suc_r(s)$ that must contain all states $\suc_r(s)=\{t \in S \mid (s,t) \in \tau_r\}$ reachable through a recovery transition.
(As recovery is an abstraction, she cannot prevent it from happening.)
The antagonist then either chooses among these states (this choice is, of course, not counted against his $k$ failure moves) or overwrites this selection by a failure transition.
Similar to strong fairness, we require that, if repair transitions are offered infinitely often, then they must be taken infinitely often.

The effect of adding the recovery transitions in this way is on the construction of the controlled limited attractor $\cla_L ( G )$.
We now get:
\\[5pt]
$\mbox{ } \qquad\cla_L ( G ) = \bigcap \{A \supseteq G \mid 
( \forall s {\in} L.\ (\suc_c(s) {\cup} \suc_r(s)) \cap A \neq 
\emptyset \mbox{ and } \suc_r(s) \subseteq A ) \mbox{ implies } s \in A\}$\\[5pt]
is the smallest set $A \supseteq G$ such that $s \in A$, if 
\begin{enumerate}
\item there is a controlled \emph{or} a recovery transition from $s$ to $A$, and
\item all recovery transitions originating from $s$ lead to $A$.
\end{enumerate}

Note that we assume here that, as part of its strategy, the recovery mechanism can decide \emph{not} to play a move, in the light of the possibility that, when she waits long enough, the repair agent will improve the situation, progressing towards $G$.

\paragraph{\bf Complexity.}
When used on graphs (as in Subsection \ref{ssec:kSafety}), then determining if a state is in the controlled limited attractor is NL-complete (cf.\ Lemma~\ref{lem:NLcla}).
This situation changes when we use the controlled limited attractor on games: the problem becomes PTIME-complete.
From a more applied point of view, however, nothing changes: the attractor construction still needs to visit each transition of $\tau_u$ and each transition in $\tau_r$ once.

\begin{lemma}
$\cla_L(G)$ can be constructed in time $O(|\cal T|)$, and checking membership of a state in $\cla_L(G)$ is PTIME-complete.
\end{lemma}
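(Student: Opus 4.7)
The plan is to prove the two claims separately, reusing standard techniques for alternating reachability.

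For the linear-time construction, I would give a worklist algorithm generalising the backward-reachability procedure from Lemma~\ref{lem:NLcla}. For every state $s \notin G$ I maintain (i) a counter $c(s)$ initialised to $|\suc_r(s)|$ and (ii) a Boolean flag $f(s)$ initialised to false. Initialise $A := G$ and push the states of $G$ onto a worklist. Whenever a state $t$ is popped, walk over every predecessor $s$ of $t$ via $\tau_c \cup \tau_r$: for each recovery edge $(s,t)\in\tau_r$ decrement $c(s)$, and for each edge $(s,t)\in\tau_c\cup\tau_r$ set $f(s):=\text{true}$. A predecessor $s\in L\smallsetminus A$ with $c(s)=0$ and $f(s)=\text{true}$ is then added to $A$ and pushed onto the worklist. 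Each transition of $\tau_c\cup\tau_r$ is touched a constant number of times, yielding the $O(|{\cal T}|)$ bound. Correctness is the usual Kleene-iteration argument: the algorithm returns the least set containing $G$ closed under the rule in the definition of $\cla_L(G)$, which is exactly $\cla_L(G)$.

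Membership in PTIME is immediate from the construction. For PTIME-hardness, the plan is to reduce from reachability in AND-OR graphs, which is PTIME-complete by \cite{immerman/81/reachability}. Given an AND-OR graph with target $t$, I take the nodes as states, make every outgoing edge of an OR node a controlled transition in $\tau_c$, and make every outgoing edge of an AND node a recovery transition in $\tau_r$. Setting $G:=\{t\}$ and $L:=S$, an OR node with no recovery successors joins $\cla_L(G)$ exactly when at least one of its controlled successors is in $A$, and an AND node with at least one recovery successor joins $\cla_L(G)$ exactly when all of its recovery successors are in $A$. A routine induction on the iteration depth of the fixed-point computation then shows $v \in \cla_L(G)$ iff $t$ is alternating-reachable from $v$ in the source graph.

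The main subtlety I expect is that the existential clause $(\suc_c(s)\cup\suc_r(s))\cap A\neq\emptyset$ in the new definition of $\cla_L(G)$ could, a priori, disqualify AND nodes from membership. The key observation that resolves this is that for any state $s$ with $\suc_r(s)\neq\emptyset$, the universal condition $\suc_r(s)\subseteq A$ already entails the existential condition. Degenerate AND nodes with no outgoing edges (vacuously true under the standard alternating-reachability convention) can either be forbidden in the source problem or pre-inserted into $G$; either choice keeps the logspace reduction intact.
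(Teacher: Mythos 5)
Your proof is correct and follows essentially the same route as the paper: a linear-time counting/worklist evaluation of the alternating attractor for the first claim, and a reduction from AND-OR graph reachability for PTIME-hardness (the paper reuses the reduction of Lemma~\ref{lem:ptc} with fault transitions replaced by repair transitions, which merely swaps your AND/OR role assignment). The only cosmetic point is that the model assumes every non-error state has at least one controlled successor, so, as in the paper's Lemma~\ref{lem:ptc} reduction, you should add a controlled self-loop to each node that would otherwise carry only recovery edges; this does not affect the fixed point, since a not-yet-added state's self-loop cannot witness the existential clause.
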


\begin{proof}
The running time is implied by the construction, while hardness can be shown by the same reduction to AND-OR graphs as in the hardness proof of Lemma~\ref{lem:ptc}, replacing fault transitions by repair transitions.
\qed
\end{proof}

The maintained bound on the running time for computing the controlled limited attractor in games implies that the bounds of $\safe_k$ and $\res_k$ are maintained as well.

\begin{corollary}
$\safe_k(G)$ resp.\ $\res_k(G)$ can be constructed in time $O(k \cdot |{\cal T}|)$ resp.\
$O(k \cdot |S| \cdot |{\cal T}|)$.

For the initial state $\iota$, we can determine the resilience level $k_{\max} =
\max\{n\in \mathbb N_0 \mid \iota \in \res_n(S \smallsetminus F)\}$ of $\iota$,
$\res_{k_{\max}}(S \smallsetminus F)$, and a memoryless $k_{\max}$-resilient control
strategy for $\res_{k_{\max}}(S\smallsetminus F)$ in $O(|S| \cdot |{\cal T}|\cdot k_{\max} \log k_{\max})$ time.
\qed
\end{corollary}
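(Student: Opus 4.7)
The plan is to lift the three complexity bounds from the previous subsection verbatim, using the fact that the only primitive that has changed its definition is the controlled limited attractor $\cla_L(\cdot)$, and that the preceding lemma has already shown that $\cla_L(G)$ can still be constructed in time $O(|{\cal T}|)$ in the presence of repair transitions. Once this is granted, the structure of Construction~\ref{con:safe} and Construction~\ref{con:res} is unchanged.

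First, I would argue for the $\safe_k$ bound. Construction~\ref{con:safe} builds a descending chain of attractor sets $A_0 \supseteq \ldots \supseteq A_{k-1}$ and limited regions $L_0 \supseteq \ldots \supseteq L_k$, each step requiring one attractor computation together with a $\frag(\cdot)$ computation, an intersection, and a difference. Each of these individual steps is linear in $|{\cal T}|$, and the $\safe_0$ closure at the end is linear as well by Lemma~\ref{lem:NLSafe0} (adapted to accommodate that states may now also have mandatory repair successors, which only restricts the safety kernel and does not affect its running time). Summing over the $O(k)$ layers yields the $O(k\cdot |{\cal T}|)$ bound. The monotonicity of $\safe_k$, established in the proof of Lemma~\ref{fixpoint}, still holds because a winning protagonist strategy for $G$ remains winning for $G'\supseteq G$ even when some transitions are mandatory repair transitions; hence the greatest fixed point exists.

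Second, for $\res_k(G)$ I would iterate $\safe_k$ as in Construction~\ref{con:res}. Since each iteration either strictly decreases the set or reaches the fixed point, and the set lives in $S$, at most $|S|$ iterations suffice, giving $O(k\cdot |S|\cdot |{\cal T}|)$. The memoryless strategy is the one extracted from the outermost $\safe_k$ call at the fixed point, exactly as in Lemma~\ref{lem:costRes}.

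Third, for the resilience level of $\iota$, I would use the same doubling-then-binary-search scheme sketched before the corollary: start with $i=1$, double $i$ until $\iota\notin\res_i(S\smallsetminus F)$, then do binary search in the last interval. This uses $O(\log k_{\max})$ calls to the $\res$ procedure, each costing $O(k_{\max}\cdot |S|\cdot |{\cal T}|)$ (using the largest parameter encountered), for a total of $O(|S|\cdot |{\cal T}|\cdot k_{\max}\log k_{\max})$ time.

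The main obstacle I anticipate is the fairness side-condition on repair transitions: one has to verify that the attractor semantics used in the modified $\cla_L$ is consistent with the requirement that repair moves offered infinitely often must be taken infinitely often. This is handled by the two conjuncts in the modified definition of $\cla_L$, which simultaneously demand progress (a controlled or repair successor in $A$) and closure under repair (all repair successors in $A$); hence any infinite play that respects the attractor strategy cannot indefinitely postpone repair, and so the fairness condition is automatically met. With this observation the complexity analysis reduces to counting primitive operations, as above.
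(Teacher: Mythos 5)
Your proposal matches the paper's own (very terse) argument: the corollary is justified there solely by the observation that the modified controlled limited attractor is still computable in $O(|{\cal T}|)$ time, so the bounds of Constructions~\ref{con:safe} and \ref{con:res} and the doubling-plus-binary-search for $k_{\max}$ carry over unchanged. Your additional remarks on monotonicity and on the fairness condition being absorbed by the two conjuncts in the new $\cla_L$ definition are correct elaborations of what the paper leaves implicit.
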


As an alternative definition, we could force the control player to make a move instead of allowing her to abstain.
In this alternative version, the evaluation of the attractor game is similar to the evaluation of a reachability game with strong fairness.
Consequently the complexities would grow by a factor of $|S|$.

\begin{theorem}
\label{theo:nested}
In the alternative setting, $\safe_k(G)$ and $\res_k(G)$ can be 
constructed in time $O(k \cdot |S| \cdot |{\cal T}|)$ and 
$O(k \cdot |S|^2 \cdot |{\cal T}|)$, respectively.

For the initial state $\iota$, we can determine the resilience level $k_{\max} =
\max\{n\in \mathbb N_0 \mid \iota \in \res_n(S \smallsetminus F)\}$ of $\iota$,
$\res_{k_{\max}}(S \smallsetminus F)$, and a memoryless $k_{\max}$-resilient control
strategy for $\res_{k_{\max}}(S\smallsetminus F)$ in $O(|S|^2 \cdot |{\cal T}|\cdot k_{\max} \log k_{\max})$ time.
\qed
\end{theorem}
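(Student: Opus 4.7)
The plan is to follow the complexity derivation of the preceding corollary essentially verbatim, replacing only the bound on the controlled limited attractor construction $\cla_L(G)$. In the alternative setting, the protagonist can no longer abstain and the repair transitions carry a strong fairness constraint (if offered infinitely often, they are taken infinitely often). This turns the backward-reachability computation for $\cla_L(G)$ into a reachability game with one strong-fairness assumption, which can be solved via a nested fixed point in time $O(|S|\cdot|\cal T|)$: an outer loop, bounded by $|S|$ rounds since at least one state is excluded per non-trivial round, repeatedly prunes states from which the protagonist can be trapped in cycles that avoid both $G$ and the fair repair transitions; each inner iteration is a standard attractor computation running in $O(|\cal T|)$ time. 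A memoryless strategy is then extracted from the ranks of the nested fixed point in the standard way.

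Given this replacement bound for $\cla_L(G)$, Construction~\ref{con:safe} builds the descending chain $A_0\supseteq A_1\supseteq\cdots\supseteq A_{k-1}$ using $O(k)$ attractor computations, so $\safe_k(G)$ is obtained in time $O(k\cdot|S|\cdot|\cal T|)$, matching the first bound of the theorem. The iterated application from Construction~\ref{con:res} computes $\res_k(G)$ as the greatest fixed point of $\safe_k$; as in the earlier argument, the chain stabilizes after at most $|S|$ nested applications of $\safe_k$, since each non-fixed-point step strictly shrinks the working set. This yields the overall $O(k\cdot|S|^2\cdot|\cal T|)$ bound for $\res_k(G)$.

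For the resilience level $k_{\max}$ of the initial state $\iota$, I would reuse the doubling-then-binary-search scheme from the paragraph preceding the earlier corollary: double $k$ starting from $1$ until $\iota\notin\res_k(S\smallsetminus F)$, then binary-search the exact threshold. This requires $O(\log k_{\max})$ calls to $\res_i$ for values $i=O(k_{\max})$, so the total cost is $O(|S|^2\cdot|\cal T|\cdot k_{\max}\log k_{\max})$. The memoryless $k_{\max}$-resilient control strategy is read off from the outermost application of $\safe_k$ to $\res_{k_{\max}}(S\smallsetminus F)$, exactly as in the remark following Construction~\ref{con:res}.

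The main obstacle is justifying that the strong-fairness attractor can genuinely be computed in $O(|S|\cdot|\cal T|)$ time together with a memoryless winning strategy; once this is established, the remaining bounds follow by mechanical bookkeeping inherited from the preceding complexity arguments. Memorylessness is available because the fairness constraint binds only the antagonist's repair moves, while the protagonist's extracted strategy can always be chosen to strictly decrease the rank in the nested fixed point, so no finite history is required.
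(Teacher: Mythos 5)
Your proposal matches the paper's own argument: the paper likewise reduces the alternative setting to evaluating the controlled limited attractor as a reachability game under a strong-fairness assumption on the repair transitions, solved by a nested fixed point costing an extra factor of $|S|$, and then inherits the remaining bounds for $\safe_k$, $\res_k$, and the doubling-plus-binary-search computation of $k_{\max}$ from the earlier complexity lemmas unchanged. The only point deserving a little more care is the memorylessness claim for the fairness game, but your observation that the protagonist's objective (reach $G$ or witness unfairness) admits rank-decreasing memoryless strategies is exactly the justification needed.
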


The proof and the details of the construction are moved to the appendix.

\section{Tool Implementation and Experiments \label{sec.imp.exp}} 

We adopt {\em CEFSM} 
({\em communicating extended finite-state machine})~\cite{BH89} 
as a convenient language for the description of abstract models of 
our state transition systems. 
A CEFSM consists of several finite-state machines extended with shared variables 
for the modeling of shared memory and 
with synchronizations for 
the modeling of message-passing in distributed systems.  
This is justifiable 
since the fault-tolerant algorithms may themselves be subject to  
restrictions in concurrent or distributed computation.  
Indeed, we found CEFSM very expressive in modeling the 
benchmarks from the literature \cite{CL99,RSB90}.  
\smallskip

\noindent{\bf Implementation.}
In the following, we report our implementation and experiment 
with our constructions.  
Our implementation is based on symbolic on-the-fly 
model-checking techniques 
and built on the simulation/model-checking library of REDLIB in 
\texttt{http://sourceforge.net/projects/redlib/} for fast implementation.  
Our implementation and benchmarks can also be found in the same page. 

The translation from our CEFSMs to state transition systems, 
for example finite Kripke structures, is standard in the literature. 
All state spaces, conditions, preconditions, post-conditions, 
fixed points, etc. are represented as logic formulas. 
The logic formulas are then implemented with multi-value decision diagrams 
(MDD) \cite{MD98}.  
Due to the huge complexity of the transition relations, we did not explicitly construct the transition relations for precondition and 
post-condition calculation.  
Instead, the preconditions and post-conditions are constructed 
in a piecewise construction from the basic conditions for simple 
actions and transition rule triggering conditions.  
\smallskip

\noindent{\bf Benchmarks.}
We use the following five parameterized 
benchmarks to check the performance of our techniques. 
Each benchmark has parameters for the number of participating 
modules in the model.  
Such parameterized models come in handy for the evaluation of the scalability of our techniques with respect to concurrency and model sizes.  
\begin{itemize} 
\item[1.] We use the running example of avionic systems in 
  Section~\ref{sec:faulTolerance} as our first benchmark.  
  An important feature of this benchmark is that there is an 
  assumed mechanism in detecting faults of the modules.  
  Once a fault is detected, a processor can be assigned to recover the 
  module, albeit to the cost of a reduced redundancy in the executions.

\item[2.] Voting is a common technique for fault tolerance through 
  replication when there is no mechanism to detect faults of the 
  modules.  
  In its simplest form, a system can guarantee correctness, provided less than half of its modules are faulty.   

\item[3.] This is a simplified version of the previous voting benchmark, where we assume that there is a blackboard 
  for the client to check the voting result.  

\item[4.] {\em Practical Byzantine fault-tolerance} ({\em PBFT}) algorithm: 
  We use an abstract model of the famous algorithm by 
  Castro and Liskov \cite{CL99}. 
  It does not assume the availability of a fault-detection 
  mechanism but uses 
  voting techniques to guarantee the correctness of 
  computations when less than one third of the voters are faulty.  
  This algorithm has impact on the design of many protocols 
  \cite{QU,HQ,Zyzzyva,ABsTRACTs,Aardvark} and is used in 
  Bitcoin (\verb+http://bitcoin.org/+), a peer-to-peer digital currency system.  

\item[5.] {\em Fault-tolerant clock synchronization algorithm}: 
  Clock synchronization is a central issue in distributed computing. 
  In \cite{RSB90}, 
  Ramanathan, Shin, and Butler presented 
  several fault-tolerance clock synchronization algorithms 
  in the presence of Byzantine faults with high probability.  
  We use a nondeterministic abstract model of the convergence averaging 
  algorithm in their paper.  
  The algorithm is proven correct when no more than one third of the 
  local clocks can drift to eight time units from the median of all clock 
  readings.  
\end{itemize}
\smallskip

\noindent{\bf Modeling of the fault-tolerant systems.}
Appropriate modeling of the benchmarks is always important for 
the efficient verification of real-world target systems. 
Many unnecessary details can burden the verification algorithm 
and blow up the computation.  
On the other hand, sketchy models can then give too many false alarms 
and miss correct benchmarks.  
In this regard, we find that there is an interesting issue 
in the modeling of the above benchmarks.  
Replication and voting are commonly adopted techniques for 
achieving fault-tolerance and resilience.  
As can be seen, such fault-tolerant algorithms usually consist of 
several identical modules that use the same behavior templates.  
This observation implies that the identity of individual modules 
can be unimportant for some benchmarks.  
For such benchmarks, we can use counter abstraction 
\cite{ET99,Lubachevsky84} in their models.
(Details on the counter abstraction we used are provided in an appendix.)
Specifically, we found that we can use counter abstraction to 
prove the correctness of benchmarks 1, 2, and 3. 
In contrast, the PBFT and the clock synchronization algorithms use 
counters for each module to model the responses received 
from its peer modules. 
As a result, we decided not to use counter abstraction to model 
these two algorithms in this work. 

In the following, we use the running example of avionic systems 
in Section~\ref{sec:faulTolerance} to explain how we 
model a benchmark either as a plain CEFSM or with counter abstraction. 
We first present its CEFSM model template 
in Figure~\ref{fig.pm}. 
\begin{figure}[t] 
\begin{center}
\begin{picture}(0,0)%
\includegraphics{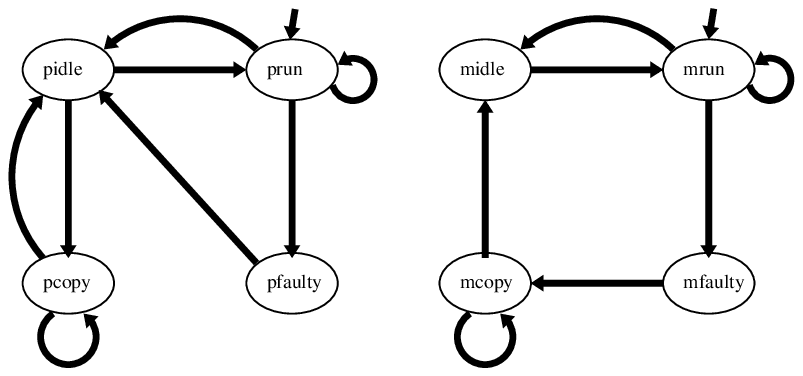}%
\end{picture}%
\setlength{\unitlength}{4144sp}%
\begingroup\makeatletter\ifx\SetFigFont\undefined%
\gdef\SetFigFont#1#2#3#4#5{%
  \reset@font\fontsize{#1}{#2pt}%
  \fontfamily{#3}\fontseries{#4}\fontshape{#5}%
  \selectfont}%
\fi\endgroup%
\begin{picture}(3923,1883)(436,-1132)
\put(1543,424){\makebox(0,0)[lb]{\smash{{\SetFigFont{6}{7.2}{\rmdefault}{\mddefault}{\updefault}{\color[rgb]{0,0,0}$C$}%
}}}}
\put(1497,-41){\makebox(0,0)[lb]{\smash{{\SetFigFont{6}{7.2}{\rmdefault}{\mddefault}{\updefault}{\color[rgb]{0,0,0}$R$}%
}}}}
\put(1055,-134){\makebox(0,0)[lb]{\smash{{\SetFigFont{6}{7.2}{\rmdefault}{\mddefault}{\updefault}{\color[rgb]{0,0,0}$C$}%
}}}}
\put(1009,-924){\makebox(0,0)[lb]{\smash{{\SetFigFont{6}{7.2}{\rmdefault}{\mddefault}{\updefault}{\color[rgb]{0,0,0}$U$}%
}}}}
\put(2078,-88){\makebox(0,0)[lb]{\smash{{\SetFigFont{6}{7.2}{\rmdefault}{\mddefault}{\updefault}{\color[rgb]{0,0,0}$U$}%
}}}}
\put(3426,400){\makebox(0,0)[lb]{\smash{{\SetFigFont{6}{7.2}{\rmdefault}{\mddefault}{\updefault}{\color[rgb]{0,0,0}$C$}%
}}}}
\put(3983,-111){\makebox(0,0)[lb]{\smash{{\SetFigFont{6}{7.2}{\rmdefault}{\mddefault}{\updefault}{\color[rgb]{0,0,0}$U$}%
}}}}
\put(2961,-134){\makebox(0,0)[lb]{\smash{{\SetFigFont{6}{7.2}{\rmdefault}{\mddefault}{\updefault}{\color[rgb]{0,0,0}$R$}%
}}}}
\put(2914,-947){\makebox(0,0)[lb]{\smash{{\SetFigFont{6}{7.2}{\rmdefault}{\mddefault}{\updefault}{\color[rgb]{0,0,0}$C$}%
}}}}
\put(3426,-552){\makebox(0,0)[lb]{\smash{{\SetFigFont{6}{7.2}{\rmdefault}{\mddefault}{\updefault}{\color[rgb]{0,0,0}$C$}%
}}}}
\put(2961,-250){\makebox(0,0)[lb]{\smash{{\SetFigFont{6}{7.2}{\rmdefault}{\mddefault}{\updefault}{\color[rgb]{0,0,0}$?$rs}%
}}}}
\put(637,-111){\makebox(0,0)[lb]{\smash{{\SetFigFont{6}{7.2}{\rmdefault}{\mddefault}{\updefault}{\color[rgb]{0,0,0}$R$}%
}}}}
\put(1218,-1087){\makebox(0,0)[lb]{\smash{{\SetFigFont{6}{7.2}{\rmdefault}{\mddefault}{\updefault}{\color[rgb]{0,0,0}(a) processors}%
}}}}
\put(3007,-1087){\makebox(0,0)[lb]{\smash{{\SetFigFont{6}{7.2}{\rmdefault}{\mddefault}{\updefault}{\color[rgb]{0,0,0}(b) memory modules}%
}}}}
\put(2333,307){\makebox(0,0)[lb]{\smash{{\SetFigFont{6}{7.2}{\rmdefault}{\mddefault}{\updefault}{\color[rgb]{0,0,0}$C$}%
}}}}
\put(4239,307){\makebox(0,0)[lb]{\smash{{\SetFigFont{6}{7.2}{\rmdefault}{\mddefault}{\updefault}{\color[rgb]{0,0,0}$C$}%
}}}}
\put(1381,284){\makebox(0,0)[lb]{\smash{{\SetFigFont{6}{7.2}{\rmdefault}{\mddefault}{\updefault}{\color[rgb]{0,0,0}\textit{crp}++;}%
}}}}
\put(2078,-181){\makebox(0,0)[lb]{\smash{{\SetFigFont{6}{7.2}{\rmdefault}{\mddefault}{\updefault}{\color[rgb]{0,0,0}\textit{crp}-\hspace*{.2mm}-;}%
}}}}
\put(2078,-273){\makebox(0,0)[lb]{\smash{{\SetFigFont{6}{7.2}{\rmdefault}{\mddefault}{\updefault}{\color[rgb]{0,0,0}\textit{cfp}++;}%
}}}}
\put(1566,-134){\makebox(0,0)[lb]{\smash{{\SetFigFont{6}{7.2}{\rmdefault}{\mddefault}{\updefault}{\color[rgb]{0,0,0}\textit{cfp}-\hspace*{.2mm}-;}%
}}}}
\put(1055,-343){\makebox(0,0)[lb]{\smash{{\SetFigFont{6}{7.2}{\rmdefault}{\mddefault}{\updefault}{\color[rgb]{0,0,0}\textit{idm}$=q$;}%
}}}}
\put(1055,-250){\makebox(0,0)[lb]{\smash{{\SetFigFont{6}{7.2}{\rmdefault}{\mddefault}{\updefault}{\color[rgb]{0,0,0}$!$fd@$q$}%
}}}}
\put(637,-204){\makebox(0,0)[lb]{\smash{{\SetFigFont{6}{7.2}{\rmdefault}{\mddefault}{\updefault}{\color[rgb]{0,0,0}$!$rs}%
}}}}
\put(451,-297){\makebox(0,0)[lb]{\smash{{\SetFigFont{6}{7.2}{\rmdefault}{\mddefault}{\updefault}{\color[rgb]{0,0,0}\textit{idm}$=0$;}%
}}}}
\put(1520,656){\makebox(0,0)[lb]{\smash{{\SetFigFont{6}{7.2}{\rmdefault}{\mddefault}{\updefault}{\color[rgb]{0,0,0}\textit{crp}-\hspace*{.2mm}-;}%
}}}}
\put(1404,656){\makebox(0,0)[lb]{\smash{{\SetFigFont{6}{7.2}{\rmdefault}{\mddefault}{\updefault}{\color[rgb]{0,0,0}$C$}%
}}}}
\put(3356,633){\makebox(0,0)[lb]{\smash{{\SetFigFont{6}{7.2}{\rmdefault}{\mddefault}{\updefault}{\color[rgb]{0,0,0}$C$}%
}}}}
\put(3449,633){\makebox(0,0)[lb]{\smash{{\SetFigFont{6}{7.2}{\rmdefault}{\mddefault}{\updefault}{\color[rgb]{0,0,0}\textit{crm}-\hspace*{.2mm}-;}%
}}}}
\put(3402,284){\makebox(0,0)[lb]{\smash{{\SetFigFont{6}{7.2}{\rmdefault}{\mddefault}{\updefault}{\color[rgb]{0,0,0}\textit{crm}++;}%
}}}}
\put(3983,-204){\makebox(0,0)[lb]{\smash{{\SetFigFont{6}{7.2}{\rmdefault}{\mddefault}{\updefault}{\color[rgb]{0,0,0}\textit{crm}-\hspace*{.2mm}-;}%
}}}}
\put(3983,-297){\makebox(0,0)[lb]{\smash{{\SetFigFont{6}{7.2}{\rmdefault}{\mddefault}{\updefault}{\color[rgb]{0,0,0}\textit{cfm}++;}%
}}}}
\put(3426,-715){\makebox(0,0)[lb]{\smash{{\SetFigFont{6}{7.2}{\rmdefault}{\mddefault}{\updefault}{\color[rgb]{0,0,0}$?$fd}%
}}}}
\put(3426,-808){\makebox(0,0)[lb]{\smash{{\SetFigFont{6}{7.2}{\rmdefault}{\mddefault}{\updefault}{\color[rgb]{0,0,0}\textit{cfm}-\hspace*{.2mm}-;}%
}}}}
\end{picture}%
\end{center}
\vspace*{-7.5mm}
\caption{State graphs of processes and memory copies}
\label{fig.pm}
\end{figure}
The CEFSM model has $n$ processors 
and $m$ memory modules. 
Figures~\ref{fig.pm}(a) and (b) are 
for the abstraction of processors and memory copies, respectively.  
The ovals represent local states of a processor or a 
memory module, while the arrows represent transitions. 
The transitions of a CEFSM  
are labeled with labels `U' (for uncontrollable), 
`C' (for controllable), or `R' (for recovery).  
For example, when a memory module moves into a faulty state, 
an idle processor may issue an {\tt fd} (fault-detected) event
and try to repair the module by copying memory contents from 
normal memory modules.  
Such fault-detection is usually achieved with standard hardware.  
Note that the benchmarks are models that reflect the 
recovery mechanism, abstracting away the details of the original systems.  
A central issue in the design of this recovery mechanism is then 
the resilience level of the controlled systems.  
More details are provided in an appendix.
\smallskip

\noindent{\bf Performance data.}
We report the performance data 
in Table~\ref{tab.perf} for the resilience algorithms 
described in Section \ref{subsec.a.ra} 
against the parameterized benchmarks in the above with various 
parameters.  
\begin{table}[t] 
\vspace*{-5mm}
\caption{Performance data for resilience calculation \hspace{21mm} s: seconds; M: megabytes} 
\label{tab.perf} 
\begin{center} 
\begin{tabular}{l|c||c|c||c|c} \hline 
benchmarks & concurrency & \multicolumn{2}{c||}{sfrch$_k$}  
			& \multicolumn{2}{c}{res$_k$} \\\cline{3-6}
	 & 	& time  & memory   & time & memory  \\
\hline \hline 
avionics & 6 processors \& 6 memory modules & 2.89s & 129M & 3.54s & 516M \\ \cline{2-6} 
	 & 7 processors \& 7 memory modules & 10.7s & 216M & 23.4s & 808M \\ \cline{2-6}
	 & 8 processors \& 8 memory modules & 43.8s & 1009M & 135s & 2430M \\ \hline 
voting	 & 1 client \& 20 replicas & 5.18s & 229M & 13.8s & 236M \\ \cline{2-6} 
	 & 1 client \& 26 replicas & 15.2s & 334M & 48.1s & 348M \\ \hline 
simple	 & 1 client \& 150 replicas & 0.71s & 159M & 31.7s & 219M \\ \cline{2-6} 
voting	 & 1 client \& 200 replicas & 1.06s & 161M & 162s & 337M \\ \cline{2-6} 
	 & 1 client \& 250 replicas & 1.36s & 163M & 307s & 499M \\ \hline 
PBFT 	 & 1 client \& 6 replicas & 0.34s & 72M & 1.05s & 193M \\ \cline{2-6} 
	 & 1 client \& 9 replicas & 13.3s & 564M & 58.5s & 1657M \\ \hline 
clock	 & 1 client \& 15 servers  &  13.3s & 547M & 25.1s & 648M  \\ \cline{2-6} 
sync	 & 1 client \& 17 severs  & 20.7s & 957M & 58.3s & 1127M  \\ \hline 
\end{tabular}
\vspace*{-.8mm}
\end{center}
\vspace*{-5mm}
\end{table}
For the avionics system,
the resilience level $k$ is set to 
one less than half the 
number of processors.  
For the voting and simple voting benchmarks, 
the value of $k$ is set to one less than half the number of 
replicas (voters). 
For the PBFT and clock synchronization algorithm, we choose 
$k$ to be one less than one third of the number of replicas. 

The performance data has been collected with a Virtual Machine (VM) 
running opensuse 11.4 x86 
on Intel i7 2600k 3.8GHz CPU 
with 4 cores and 8G memory. 
The VM only uses one core and 4G memory.  

The time and space used to calculate resilience is a little bit more 
than that to check for $\safe$.  
The reason is that $\safe_k$ is a pre-requisite for calculating $\res_k$.  
In our experiment, $\safe_k$ is usually 
very close to $\res_k$ and does not require much extra time in 
calculating $\res_k$ out of $\safe_k$.  

The experiments show that our techniques 
scale to realistic levels of redundancy.
For fault-tolerant hardware, usually the numbers of replicas are small, 
for example, less than 10 replicas. 
Thus our techniques seem very promising for the 
verification of hardware fault-tolerance.  

On the other hand, 
nowadays, software fault-tolerance through networked computers can 
create huge numbers of replicas.  
Our experiment shows that abstraction can be a useful 
techniques for the modeling and verification of software resilience.

\section{Discussion}
We introduced an approach for the development of a control of safety critical
systems that maximizes the number of \emph{dense} failures the system can tolerate.
Our techniques are inspired by the problem of controlling systems with redundancy: in order to deflect the effect of individual failures, safety critical systems are often equipped with multiple copies of various components.
If one or more components fail, such systems can still work properly as long as the correct behavior can be identified.
 
This inspired the two layered approach we developed in this paper.
In a first layer, a complex system system is simplified to the aspects relevant for its control.
This abstraction layer of our approach reduces life size examples to small abstractions.
In a second layer, we develop a control strategy, 
in which the controller identifies a $k$-resilient region; the
controller can recover without seeing an additional error to 
the $k$-resilient region.
Such a recovering strategy is memoryless.
Being memoryless on a small abstraction in particular implies that the recovery is efficient.

The system can, once recovered, tolerate and recover from $k$ further dense failures, and so forth.
Consequently, our control strategy allows for recovery from an arbitrary
number of failures, provided that the number of dense errors is restricted.
This is the best guarantee we can hope for: our technique guarantees to find the optimal parameter $k$.
This parameter is bound to be small (smaller than the number of redundant components).
Optimizing it is computationally inexpensive, but provides strong guarantees: the likelihood of having more than $k$ failures appear in short succession after a failure occurred are, for independent errors, exponential in $k$. As failures are few and far between, each level of resilience gained reduces the likelihood of errors significantly.

\providecommand\doi[1]{\newblock doi: \href{http://dx.doi.org/#1}{#1}.}


\begin{thebibliography}{1}

\bibitem{Arora}
A.\ Arora, M.G.\ Gouda,
Closure and Convergence: A Foundation of Fault-Tolerant Computing.
IEEE Transactions on Software Engineering (TSE) \textbf{19}(11): 1015-1027 (1993).
\doi{10.1109/32.256850}

\bibitem{Asarin+Maler+Pnueli/94/controlerSynthesis}
E.\ Asarin, O.\ Maler, A.\ Pnueli,
Symbolic controller synthesis for discrete and timed systems,
Proc.\ of Hybrid Systems, LNCS 999,
1--20, 1994.
\doi{10.1007/3-540-60472-3\_1}

\bibitem{AAE}
P.C.\ Attie, A.\ Arora, E.A.\ Emerson, Synthesis of fault-tolerant concurrent
programs, ACM Transactions on Programming Languages and Systems \textbf{26}(1), 125-185 (2004).
\doi{10.1145/963778.963782}

\bibitem{BH89}  
F.\ Belina, D.\ Hogrefe,
The CCITT-specification and description language SDL,
Computer Networks and ISDN Systems \textbf{16}(4), 
March 1989, 311--341.
\doi{10.1016/0169-7552(89)90078-0}
  
\bibitem{BGHJ}
R.\ Bloem, K.\ Greimel, T.A.\ Henzinger, B.\ Jobstmann, Synthesizing robust systems. FMCAD (2009), 85--92
\doi{10.1109/FMCAD.2009.5351139}

\bibitem{Buchi/62/Automata}
J.R.\ B{\"u}chi,
On a decision method in restricted second order arithmetic,
International Congress on Logic, Methodology,
  and Philosophy of Science (1962),  1--11.

\bibitem{Buchi+Landweber/69/Sequential}
J.R.\ B{\"u}chi, L. H.  Landweber, 
Solving sequential conditions by finite-state strategies,
Transactions of the American Mathematical Society \textbf{138} (1969),
295--311.


\bibitem{Church/63/Logic}
A.\ Church,
Logic, arithmetic and automata,
International Congress of Mathematicians,
  15--22 August, Institut Mittag-Leffler, Djursholm, Sweden, 1962 (Stockholm
  1963),  23--35.

\bibitem{CL99} 
  M.\ Castro, B.\ Liskov.  
  Practical Byzantine fault tolerance.  
  ACM Transactions on Computer Systems
  \textbf{20},(4): 398-461 (2002).
  \doi{10.1145/571637.571640}

\bibitem{Aardvark} A.\ Clement, E.\ Wong, L.\ Alvisi, M.\ Dahlin, M.\ Marchetti.
Making Byzantine Fault Tolerant Systems Tolerate Byzantine Faults,
Proc.\ of NSDI, 153--168, 2009.

\bibitem{HQ} J.A.\ Cowling, D.S.\ Myers, B.\ Liskov, R.\ Rodrigues,
  L.\ Shrira.
  HQ Replication: A Hybrid Quorum Protocol for Byzantine Fault Tolerance,
  USENIX OSDI Symposium 2006.

\bibitem{Dijkstra}
E.W.\ Dijkstra, A Belated Proof of Self-Stabilization. Distributed Computing \textbf{1}(1): 5-6 (1986).
\doi{10.1007/BF01843566}

\bibitem{DHLN}
L.\ Doyen, T.A.\ Henzinger, A.\ Legay, D.\ Nickovic, Robustness of Sequential Circuits. Proc.\ of ACSD, 77--84, 2010.
\doi{10.1109/ACSD.2010.26}

\bibitem{EKA}
A.\ Ebnenasir, S.S.\ Kulkarni, A.\ Arora: FTSyn: a framework for automatic
synthesis of fault-tolerance, STTT \textbf{10}(5), 455-471 (2008).
\doi{10.1007/s10009-008-0083-0}

\bibitem{ET99} 
E.A. Emerson, R. Trefler. 
From asymmetry to full symmetry: New techniques for symmetry reduction 
in model-checking.  
Proc.\ of CHARME, 142--156, 1999. 
\doi{10.1007/3-540-48153-2\_12}

\bibitem{FRSZ-fsttcs11}
J.\ Fearnley, M.N.\ Rabe, S.\ Schewe, L.\ Zhang.
\newblock Efficient approximation of optimal control for continuous-time Markov
  games.
 Proc.\ of FSTTCS, 399--410, 2011.
\doi{10.4230/LIPIcs.FSTTCS.2011.399}


\bibitem{ABsTRACTs}
R.\ Guerraoui, N.\ Kne\v{z}evi\'{c}, M.\ Vukoli\'{c},
  V.\ Qu\'{e}ma. 
  The Next 700 BFT Protocols,
  Proc.\ of EuroSys, 163--176,
  2010.
\doi{ 10.1145/1755913.1755950}

\bibitem{GR}
A.\ Girault, E.\ Rutten, Automating the addition of fault tolerance with
discrete controller synthesis. Formal Methods in System Design 35(2), 
190-225 (2009).
\doi{10.1007/s10703-009-0084-y}

\bibitem{Immerman/88/NCoNL}
N.\ Immerman, Nondeterministic space is closed under complementation,
SIAM Journal on Computing \textbf{17} (1988),  935--938.
\doi{10.1137/0217058}

\bibitem{immerman/81/reachability}
N.\ Immerman,
Number of quantifiers is better than number of tape cells,
Journal of Computer and System Sciences \textbf{22} (1981),  384--406.
\doi{10.1016/0022-0000(81)90039-8}

\bibitem{GRT}
H.\ Jin, K.\ Ravi, F.\ Somenzi:
Fate and free will in error traces.
International Journal on Software Tools for Technology Transfer (STTT), \textbf{6}(2):102--116 (2004).
\doi{10.1007/s10009-004-0146-9}

\bibitem{Zyzzyva}
R.\ Kotla, L.\ Alvisi, M.\ Dahlin, A.\ Clement, E.\ Wong. 
Zyzzyva: Speculative Byzantine Fault Tolerance,
ACM Transactions on Computer Systems, {\bf 27}(4), 2009. 
\doi{10.1145/1658357.1658358}

\bibitem{Kupferman+Vardi/97/Incomplete}
O.\ Kupferman, M.Y.\ Vardi,
Synthesis with incomplete informatio,
Proc.\ of ICTL,  91--106, 1997.

\bibitem{Lubachevsky84} 
B. Lubachevsky. 
An approach to automating the verification of compact parallel 
coordination programs.  
Acta Informatica \textbf{21}:125--169, 1984.  
\doi{10.1007/BF00289237}

\bibitem{QU}
M.\ Abd-El-Malek, G.\ Ganger, G.\ Goodson, M.\ Reiter, J.\ Wylie.
  Fault-scalable Byzantine Fault-Tolerant Services,
Proc.\ of SOSP, 59--74, 2005.
\doi{10.1145/1095810.1095817}

\bibitem{MD98} 
D.M.\ Miller, R.\ Drechsler, Implementing a multiple-valued decision 
diagram package. 
Proc.\ of ISMVL, 52--57, 1998.

\bibitem{Papadimitriou/94/complexity}
C.H.\ Papadimitriou, 
Computational Complexity,
Addison-Wesley (1994).

\bibitem{Pnueli+Rosner/89/Synthesis}
A.\ Pnueli, R.\ Rosner,
On the synthesis of a reactive module,
Proc.\ of POPL,  179--190, 1989.
\doi{10.1145/75277.75293}

\bibitem{Pnueli+Rosner/89/Asynchronous}
A.\ Pnueli, R.\ Rosner,
On the synthesis of an asynchronous reactive module,
Proc.\ of ICALP,  652--671, 1989.
\doi{10.1007/BFb0035790}

\bibitem{Pradhan96} 
D.K.\ Pradhan. 
Fault-tolerant computer system design. 
Prentice-Hall, Inc., 1996.  


\bibitem{Rabin/69/Automata}
M.O.\ Rabin,
Decidability of second order theories and automata on infinite trees,
Transaction of the American Mathematical Society \textbf{141} (1969), 1--35.

\bibitem{Ramadge+Wonham/89/Control}
P.J.G.\ Ramadge, W.M.\ Wonham, 
The control of discrete event systems,
Proc.\ of the IEEE 77, (1989),  81--89.

\bibitem{RSB90} 
  P. Ramanathan, K. G. Shin, R. W. Butler.  
  Fault-Tolerant Clock Synchronization in Distributed Systems
  IEEE Computer,\textbf{23}(10) (1990), 33--42.  
  \doi{10.1109/2.58235}

\bibitem{Ru}
J.M.\ Rushby. Formal Specification and Verification of a Fault-Masking and
Transient-Recovery Model for Digital Flight-Control Systems, Proc.\ of FTRTFT,
237-257, 1992.
\doi{10.1007/3-540-55092-5\_13}

\bibitem{Schewe+Finkbeiner/06/Asynchronous}
S.\ Schewe, B.\ Finkbeiner,
Synthesis of asynchronous systems, Proc.\ of LOPSTR, 127--142, 2006.
\doi{10.1007/978-3-540-71410-1\_10}

\bibitem{Thomas/94/finiteStrategies}
W.\ Thomas, Finite-state strategies in regular infinite games,
Proc.\ of FSTTCS, 149--158, 1994.
\doi{10.1007/3-540-58715-2\_121}


\end{thebibliography}
\end{document}